\title{Near-Optimal Resilient Labeling Schemes}
\author{Keren Censor-Hillel}{Department of Computer Science, Technion, Israel \and \url{http://ckeren.net.technion.ac.il/}}{ckeren@cs.technion.ac.il}{https://orcid.org/0000-0003-4395-5205}{The research is supported in part by the Israel Science Foundation (grant 529/23).}
\author{Einav Huberman}{Department of Computer Science, Technion, Israel}{einav.hu@cs.technion.ac.il}{https://orcid.org/0009-0005-9675-7669}{}
\authorrunning{K. Censor-Hillel and E. Huberman}
\keywords{Labeling schemes, Erasures}
\let\oldnl\nl
\newcommand{\nonl}{\renewcommand{\nl}{\let\nl\oldnl}}
\newcommand{\floor}[1]{\left\lfloor #1 \right\rfloor}
\newcommand{\ceil}[1]{\left\lceil #1 \right\rceil}
\newcommand{\congest}{\ensuremath{\mathsf{Congest~}}}
\newcommand{\F}{\ensuremath{F}}
\newcommand{\A}{\ensuremath{\alpha}}
\newcommand{\B}{\ensuremath{\beta}}
\newcommand{\ComputeGroups}{{\tt{ComputeGroups}}}
\newcommand{\RelayGroups}{{\tt{RelayGroups}}}
\begin{document}

\maketitle

\begin{abstract}
Labeling schemes are a prevalent paradigm in various computing settings. In such schemes, an oracle is given an input graph and produces a label for each of its nodes, enabling the labels to be used for various tasks. Fundamental examples in distributed settings include distance labeling schemes, proof labeling schemes, advice schemes, and more.
This paper addresses the question of what happens in a labeling scheme if some labels are erased, e.g., due to communication loss with the oracle or hardware errors.
We adapt the notion of resilient proof-labeling schemes of Fischer, Oshman, Shamir [OPODIS 2021] and consider resiliency in general labeling schemes. A resilient labeling scheme consists of two parts -- a transformation of any given labeling to a new one, executed by the oracle, and a distributed algorithm in which the nodes can restore their original labels given the new ones, despite some label erasures.

Our contribution is a resilient labeling scheme that can handle $\F$ such erasures. Given a labeling of $\ell$ bits per node, it produces new labels with multiplicative and additive overheads of $O(1)$ and $O(\log(\F))$, respectively. The running time of the distributed reconstruction algorithm is $O(\F+(\ell\cdot \F)/\log{n})$ in the \textsf{Congest} model. 

This improves upon what can be deduced from the work of Bick, Kol, and Oshman [SODA 2022], for non-constant values of $F$. 
It is not hard to show that the running time of our distributed algorithm is optimal, making our construction near-optimal, up to the additive overhead in the label size.
\end{abstract}

\section{Introduction}
Assigning labels to nodes of a graph is a part of many fundamental computing paradigms. In distributed computing, cornerstone examples include distance and connectivity labeling schemes, as well as labeling schemes for nearest common ancestor \cite{ParterP22a, peleg2000proximity,gavoille2003compact,gavoille2004distance,bar2022fault,abraham2012fully,korman2010labeling,katz2004labeling,izsak2012note,parter2023connectivity,izumi2023deterministic,dory2021fault,gawrychowski2018labeling, alstrup2002nearest}, proof labeling schemes and distributed proofs \cite{kol2018interactive,feuilloley2021redundancy,KormanKP10,feuilloley2021introduction,emek2022locally,goos2011locally,goos2016locally,korman2006distributed,korman2006distributedA,EmekG20,Censor-HillelPP20}, advice schemes \cite{fraigniaud2009distributed, fraigniaud2007local,fraigniaud2010communication,fusco2008trade}, and more, where the above are only samples of the extensive literature on these topics. The common theme in all of the above is that an oracle assigns labels to nodes in a graph, which are then used by the nodes in a distributed algorithm. 
Since distributed networks are prone to various types of failures, Fischer, Oshman and Shamir~\cite{fischer2022explicit} posed the natural question of how to cope with nodes whose labels have been erased and defined the notion of \emph{resilient proof labeling schemes}.

We will use the following definition, 
which applies to a labeling scheme for \emph{any} purpose but captures the same essence.
A resilient labeling scheme consists of two parts. The first part is an oracle transformation of an input labeling $\varphi: V \rightarrow \{0,1\}^{\ell}$ into an output labeling $\psi: V \rightarrow \{0,1\}^{\ell'}$, where $\ell'=A\ell+B$. The second part is a distributed \congest\footnote{In the \congest model, $n$ nodes of a graph communicate in  synchronous rounds by exchanging $O(\log{n})$-bit messages along the edges of the graph.} algorithm, in which each node $v$ receives $\psi(v)$ as its input, except for at most $\F$ nodes where $\psi(v)$ is erased, and each node must recover $\varphi(v)$. The parameters of interest are (i) the number $\F$ of allowed label erasures, (ii) the multiplicative overhead $A$ and the additive overhead $B$ in the label sizes, and (iii) the round complexity of the distributed algorithm.

The aforementioned work of Fischer et al.~\cite{fischer2022explicit} provides a resilient labeling scheme for the case of \emph{uniform} labels, i.e., when the labels of all nodes are the same. For a general case, the work by Bick, Kol, and Oshman~\cite{bick2022distributed} implies  
a labeling scheme that is resilient to $F$ failures, with a constant multiplicative overhead in the label size, within  $O(\F^3 \log \F +(\ell\cdot \F^3)/\log n)$ rounds. 
In this paper, we ask how far we can reduce the overhead costs in the label size and the round complexity of the distributed algorithm.

{\center{\emph{\textbf{Question}: What are the costs of resilient labeling schemes?}}}

~\\When $F$ is constant, the work of Bick et al.~\cite{bick2022distributed} achieves constant overheads in the label size and algorithm complexity. We show that for larger values of $F$, one can do better. The following states our contribution.

\begin{restatable}{theorem}{ThmMain}
\label{thm:main}
There exists a resilient labeling scheme that tolerates $\F$ label erasures, has $O(1)$ and $O(\log{F})$ multiplicative and additive overheads to the size of the labels, respectively, and whose distributed \congest algorithm for recovering the original labels has a complexity of at most $O(\F+(\ell\cdot \F)/\log n)$ rounds.
\end{restatable}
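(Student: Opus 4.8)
The plan is to give the two components of the scheme separately -- an oracle encoding of $\varphi$ into $\psi$, and a distributed \congest recovery procedure -- and then to argue that the overheads and the round complexity are as claimed. Two simple observations drive the design. First, a budget count: allowing a multiplicative overhead of $O(1)$ means we may store $\Theta(\ell)$ redundant bits at every node, i.e. $\Theta(n\ell)$ bits of redundancy in total, whereas an $\F$-erasure pattern destroys at most $\F\ell$ bits; since $\F<n$, this leaves an $\Omega(n/\F)$ factor of slack to spend on spreading the redundancy. Second, a locality observation: if at most $\F$ nodes are erased, then every connected component of erased nodes (an ``erased cluster'') contains at most $\F$ nodes, so it has diameter $O(\F)$ and every erased node has a surviving node within distance $O(\F)$ in $G$. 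Thus recovery can be confined to radius-$O(\F)$ regions around the clusters, which is what will yield the additive $O(\F)$ term in the running time.

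For the oracle encoding I would use a Reed--Solomon-type \emph{banded} (sliding-window) MDS code laid out along a spanning structure of $G$. Concretely, fix a spanning tree and an ordering of the nodes, view each label as $O(\ell/\log n)$ symbols over a field of size $\mathrm{poly}(n)$ (so each symbol is $O(\log n)$ bits and fits in a \congest message), and have each node store, beyond its own label, a constant number of parity symbols that are MDS combinations of the labels inside a window of $O(\F)$ nodes around it, interleaved coordinate-wise. This keeps the per-node overhead at $O(\ell)$ redundant bits plus an $O(\log\F)$-bit window index, giving $\ell'=A\ell+B$ with $A=O(1)$ and $B=O(\log\F)$. The point of the banded structure is that the labels of any set of at most $\F$ nodes falling inside a single window are jointly determined by the surviving labels and parity symbols of that window, via an $O(\F)\times O(\F)$ (banded) linear system; the point of the window size $O(\F)$ is that the data needed to set up this system lives within distance $O(\F)$ and amounts to only $O(\F\ell)$ bits.

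The distributed recovery then has the two phases named \ComputeGroups and \RelayGroups. In \ComputeGroups, each node learns which of its neighbours are erased, the erased clusters grow a BFS to depth $O(\F)$ into the surviving part of $G$ and elect a surviving leader, and the algorithm forms, for each cluster, a \emph{recovery group}: a bounded-depth BFS tree rooted at a survivor that contains all of the erased nodes of the cluster together with enough surviving nodes to hold the coded symbols the cluster needs. In \RelayGroups, each group gathers along its tree the $O(\F\ell)$ bits consisting of the relevant surviving labels and parity symbols, pipelined towards the leader; the leader solves the banded system to reconstruct the erased labels, and the reconstructed values are sent back down the tree. A tree of depth $O(\F)$ contributes the additive $O(\F)$ term, pipelining $O(\F\ell)$ bits through it contributes $O(\F\ell/\log n)=O((\ell\cdot\F)/\log n)$, and since the total erased mass is $\sum_{\mathrm{clusters}}|C|\le\F$, distinct groups do not asymptotically interfere; hence the bound $O(\F+(\ell\cdot\F)/\log n)$.

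The step I expect to be the main obstacle is the design and analysis of the code, where three requirements pull against each other: the code must be \emph{sparse/banded} so that recovering a cluster gathers only $O(\F\ell)$ bits from distance $O(\F)$ (ruling out a global MDS code, whose decoding would move $\Theta(n\ell)$ bits); it must be \emph{robust} so that \emph{every} adversarially clustered $\F$-erasure pattern is recoverable (including the case where a cluster's immediate boundary is too small to hold enough redundancy, forcing a gather from the larger radius-$O(\F)$ region rather than just the boundary); and it must respect the graph metric, since an erased cluster is connected in $G$ but its nodes need not be contiguous in the chosen ordering. Making the banded code align with graph locality for general (not path-like) graphs is the delicate part, and this is precisely what the adaptive \ComputeGroups phase reconciles: rather than relying on a fixed one-dimensional layout, it builds the window around the \emph{observed} erased cluster. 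One also has to verify that the parity spread across survivors is not itself over-destroyed, which is where the $n\gg\F$ slack from the first observation is used, and to carry out the \congest bookkeeping (field size $\Theta(\log n)$, $O(\ell/\log n)$ interleaved coordinates, congestion-free pipelining across parallel groups).
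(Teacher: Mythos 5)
Your high-level architecture (constant-rate erasure code spread over $\Theta(\F)$-sized neighborhoods, gather $O(\F\ell)$ bits over radius $O(\F)$, pipeline, decode) matches the paper's, but the proposal has a genuine gap at exactly the point you flag as ``the delicate part,'' and it is not reconciled by the adaptive recovery phase. If the parity symbols are defined over windows of $O(\F)$ consecutive nodes in a fixed linear ordering of a spanning tree, then the set of symbols needed to invert the linear system for a given erased node is fixed at encoding time; no choice of recovery group at decoding time changes it. A window of $O(\F)$ consecutive positions in a DFS/preorder ordering need not be connected in $G$ and can have graph diameter $\Theta(D)$ for tree depth $D$ (consecutive preorder positions can jump from a deep leaf to a far ancestor's next child), so gathering the window's data is not an $O(\F)$-round operation. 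Worse, an erased node has lost its window index and, more importantly, the identity of the other nodes in its window; recomputing the spanning tree, the ordering, or any equivalent combinatorial structure from scratch costs $\Omega(D)$ or, for purely local structures, $\Omega(\log^* n)$-type rounds by symmetry-breaking lower bounds, which exceeds $O(\F)$ for small $\F$. This is precisely the crux the paper identifies: the encoding's locality structure must be \emph{recomputable by the nodes, deterministically and independently of which labels were erased}, in $O(\F)$ rounds. The paper spends its main technical effort there, showing that a \emph{greedy} $(2f+2,2f+1)$-ruling set, augmented with each node's $O(\log \F)$-bit distance to the set, can be restored despite $\F$ erasures (via the ``alternative nodes'' analysis), and then deriving a fixed partition with congestion-$2$, diameter-$O(\F)$ shortcuts from it. Your proposal has no counterpart to this step.

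Two secondary issues. First, your congestion claim (``distinct groups do not asymptotically interfere'') is unsupported: $\F$ singleton erased clusters can sit in overlapping radius-$O(\F)$ regions, each needing to pull the $O(\F\ell)$ bits of its own window to set up its linear system, so a single edge can be asked to carry $\Theta(\F^2\ell)$ bits; the paper avoids this because its groups partition $V$ and come with $(2,4f)$-shortcuts, so one shared gather of $O(\F\ell)$ bits per group suffices. Second, the ``$n\gg\F$ slack'' argument for the parities not being over-destroyed needs an actual code-parameter calculation (the paper does this crudely but explicitly with Justesen codes: rate $1/4$, relative distance $>1/20$, $f=80\F$, block sizes $\le 80$ bits). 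The adaptive \ComputeGroups/\RelayGroups recovery you describe would be a genuinely different route if the code could be made graph-local without a pre-agreed structure, but as written the proposal assumes away the problem that is the paper's main contribution.
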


Notice that $\Omega(\F)$ is a straightforward lower bound for the complexity of such a distributed algorithm. To see this, suppose the network is a path. If the labels of $\F$ consecutive nodes along the path get erased, then for the middle node $v$ to obtain any information about its label, it must receive information from some node whose label is not erased. A message from such a node can only reach $v$ after $\Omega(\F)$ rounds. Further, this example shows that $\Omega((\ell\cdot \F)/\log n)$ is also a lower bound for the number of rounds, due to an information-theoretic argument: the nodes on the subpath of $P$ consisting of its middle $\F/2$ nodes must obtain $O(\ell\cdot F)$ bits of information. This number of bits must pass over the two single edges that connect this subpath to the rest of $P$. Since each of the two edges can carry at most $O(\log{n})$ bits, we get that $\Omega((\ell\cdot \F)/\log n)$ rounds are required. This implies that our scheme is nearly optimal, leaving only the $O(\log{F})$ additive overhead in the size of the label as an open question.

\subsection{Technical Overview}
To use a labeling scheme despite erasures, it must be strengthened by backing up information in case it is lost. Erasure-correcting codes are designed for this purpose, but they apply in a different setting -- the entire codeword is available to the computing device, except for some bounded number of erasures. Yet, to handle label erasures, we need to encode the information and split the codeword among several nodes.
Thus, when labels in a labeling scheme are erased, restoring them requires faulty nodes to obtain information from other nodes. This means that the oracle that assigns labels must encode its original labels into new, longer ones, and the  nodes must communicate to restore their possibly erased labels. 

~\\\textbf{Partitioning:} By the above discussion on using error-correction codes, it is clear that if we partition the nodes of the graph into groups of size $\Theta(\F)$, we can encode the labels within each such subgraph, as follows. By ensuring that each group forms a connected subgraph, the nodes can quickly collect all non-erased labels within a group in $O(F+(\ell \cdot F)/\log{n})$ rounds, and thus decode and reconstruct all original labels. 

When $\F$ is large, simple methods can construct such a partition with negligible complexity overhead compared to $O(\F)$. However, when $\F$ is small, such overheads become significant. While this motivation is clear for the complexity of the distributed algorithm, it also applies to the overhead in the label size: much work is invested in providing labels of small size, e.g., the MST advice construction of Fraigniaud, Korman, and Lebhar \cite{fraigniaud2007local}.

The caveat is that if $\F$ is small, there may not exist a distributed algorithm for such a partitioning that completes within $O(\F)$ rounds. To see why, consider a simple reduction from $3$-coloring a ring, which is known to have a lower bound of $\Omega(\log^*{n})$ by Linial \cite{linial1992locality}. Given a partition into groups (paths) of size at least $3$, we can color each group node-by-node in parallel, except for its two endpoints. This takes at most $O(\F)$ rounds, as this is the size of each group. In a constant number of additional rounds, we can in parallel stitch all groups by coloring every pair of adjacent endpoints in a manner consistent with their other two neighbors, which are already colored. Thus, an $O(\F)$-round algorithm for obtaining such a partition contradicts the lower bound for small values of $\F$.

We stress that it is essential that the partition obtained by such a partitioning algorithm is exactly the same one used by the oracle for encoding. In particular, this rules out randomized partitioning algorithms or algorithms whose obtained partition depends on the identity of nodes with label erasures.

Our approach to address this issue is to have the oracle include partitioning information in the new labeling $\psi$. This way, even if $O(\F)$ rounds are insufficient to \emph{construct the partition from scratch}, the nodes can still \emph{reconstruct the partition using the labels} within $O(\F)$ rounds, despite any $\F$ label erasures. A straightforward method would be to include all the node IDs of a group in each label. However, this would increase the size $\ell'$ of the new labeling $\psi$ to at least $O(\F\log{n})$, which for small values of $\F$ is an overhead we aim to avoid. Thus, we need a way to hint at the partition without explicitly listing it. 

~\\\textbf{Ruling sets:} To this end, we employ ruling sets. An $(\A,\B)$-ruling set is a set $S\subseteq V$ such that the distance between every two nodes in $S$ is at least $\A$, and every node is within distance at most $\B$ to the set $S$. Given a ruling set with values of $\A,\B$ that are $O(\F)$, we prove that $O(\F)$ rounds are sufficient for obtaining a good partition.

Interestingly, the properties of the partition we obtain are slightly weaker but still sufficient for our needs (and still rule out an $O(\F)$-round algorithm for constructing them from scratch, by a similar reduction from $3$-coloring a ring). The partition we arrive at is a partition into groups of $\Theta(\F)$ nodes, where each group may not be connected but has low-congestion shortcuts of diameter $O(\F)$ and congestion $2$. This will allow us to collect all labels within a group in $O(\F+(\ell \cdot F)/\log{n})$ rounds.

More concretely, we aim to construct a $(2f+2,2f+1)$-ruling set $S$, for some parameter $f\geq \F$. For intuition, assume $f=\F$. We create groups by constructing a BFS tree $T_v$ rooted at each node $v\in S$, capped at a distance of $2f+1$ (with ties broken according to the smaller root ID). We then communicate only over tree edges to divide the nodes of $T_v$ into groups.
We choose $\alpha=2f+2$ as a lower bound for the distance between two nodes in $S$, ensuring that each root $v$ has at least $f+1$ nodes in its tree. These nodes are within distance $f+1$ from $v$ and cannot be within that distance from any other node in the ruling set due to our choice of $\alpha=2f+2$. 
Additionally, to ensure that the round complexity does not exceed $O(f)$, we need to bound the diameter of each tree. Any $\beta=O(f)$ would suffice. However, as we will argue, computing the ruling set in the distributed algorithm of our resilient labeling scheme is too expensive. Thus, the oracle must provide information in its new labels to help nodes reconstruct the ruling set $S$. We choose $\beta = 2f+1$ so that a node $u$ at distance $2f+2$ from a vertex $v \in S$ can determine it must be in $S$ even if its label is erased.

As promised, we note that due to the lower bound of Balliu, Brandt, Kuhn, and Olivetti \cite{DBLP:conf/stoc/Balliu0KO22}, no distributed algorithm can compute even the simpler task of a $(2,\beta)$-ruling set in fewer than $\Omega(\log{n}/\beta\log\log{n})$ rounds.  
Thus, in an algorithm that takes $O(\F)$ rounds to construct a ruling set with $\beta=\Theta(\F)$,  $\F$ must be at least $\Omega((\log n/\log\log n)^{1/2})$. This leaves the range $\F = O((\log n/\log\log n)^{1/2})$ as an intriguing area. We emphasize that such label sizes are addressed by substantial research on labeling schemes, e.g.:
Korman, Kutten, and Peleg \cite{KormanKP10} assert that for every value of $\F$, there exists a proof-labeling scheme with this label length. Baruch, Fraigniaud, and  Patt-Shamir \cite{baruch2015randomized} prove that a proof-labeling scheme of a given length can be converted into a randomized version with logarithmic length. Patt-Shamir and Perry \cite{patt2017proof} provide additional proof-labeling schemes with small non-constant labels, e.g., they prove the existence of a proof-labeling scheme for the $k$-maximum flow problem with a label size of $O(\log k)$.\footnote{To our knowledge, the fastest deterministic construction in \congest takes $\tilde{O}(\log^3 n)$ rounds, due to Faour, Ghaffari, Grunau, Kuhn, and V{\'a}clav \cite{faour2023local}. Therefore, the actual complexity may be even larger, which would further widen the range of $\F$ in which our construction outperforms other methods.}

The existence of this lower bound could be overcome by adding a single bit to each node's new label to indicate whether the node is in the ruling set. However, another significant obstacle is that the nodes need to reconstruct the ruling set given $\F$ erasures. This turns out to be a major issue, as follows.

Suppose we have a path $(w_0,\cdots, w_{f+1}, \cdots, w_{2f+1})$, with two nodes $u,v$ connected to $w_{2f+1}$, and the rest of the graph is connected through $w_0$. Suppose a $(2f+2,2f+1)$-ruling set $S$ contains $w_0$ and one of $u$ or $v$, and that $ID(v)<ID(w_0)<ID(u)$. If the labels of $u$ and $v$ both get erased, then the node $w_{f+1}$ cannot determine whether it should be part of the tree $T_{w_0}$ or not. Its distance from $v,u$ and $w_0$ is $f+1$, but the ID comparisons for $w_0,u$ and $w_0,v$ yield different results: if $u\in S$, then $w_{f+1}$ should join $T_{w_0}$, but if $v\in S$, then $w_{f+1}$ should join $T_v$. Notice that all nodes with non-erased labels have the same distance to $u$ as to $v$, making it impossible to distinguish between the two possibilities.

This seems to bring us to a dead-end, as the simple workaround of providing each node with the identity of its closest node in the ruling set $S$ is too costly. It would require adding $\Theta(\log{n})$ bits to the new labeling $\psi$, which we wish to avoid. 

~\\\textbf{Greedy ruling sets:} Our main technical contribution overcomes this challenge by demonstrating that a greedy ruling set 
\emph{does} allow the nodes to recover it given $\F$ label erasures. Such a  ruling set is constructed by iterating over the nodes in increasing order of IDs. When a node $v$ is reached, it is added to the ruling set, and all nodes within distance $2f+1$ from $v$ are excluded from the ruling set and ignored in later iterations.
Intuitively, in our above example, this ensures that $S$ contains $v$ and not $u$. Thus, when their labels are erased, the nodes can deduce that $v$ should be in $S$ because it has a smaller ID. 

While this approach solves the previous example, it is still insufficient. Suppose there is a sequence of nodes $v_1,\dots,v_f$ with decreasing IDs whose labels get erased and are not \emph{covered} by any other node in $S$ (a node $v$ is covered by a node in $S$ if their distance is at most $2f+1$). Assume that the distance between every two consecutive nodes in the sequence is $\Theta(f)$. This means that each node $v_i$ knows it needs to be in the ruling set if and only if $v_{i+1}$ is not, essentially unveiling the decisions of the greedy algorithm. However, this implies that $v_1$ needs to receive information about $v_f$, which may be at distance of $\Theta(f^2)$, implying more rounds than our target running time.

To cope with the latter example, the oracle will provide each node with its distance from $S$. This incurs an additive $O(\log{\F})$ bits in the label size. This information is useful for deciding, for some of the nodes with erased labels, whether they are in $S$ or not and, in particular, it breaks the long chain of the latter example. Yet, notice that in the former example, this information does not help us, since all nodes apart from $u$ and $v$ have the same distance to $u$ as to $v$. The reason this helps in one example but not in the other is as follows. For any two nodes $u$ and $v$ with erased labels that need to decide which of them is in $S$, as long as we have at least $f-1$ additional nodes with a \emph{different} distance from $u$ and $v$, we can easily choose between $u$ and $v$, because in total there can be at most $f$ label erasures, so one of these $f-1$ reveals which node needs to be chosen. 

Thus, to reconstruct a greedy $(2f+2,2f+1)$-ruling set $S$ using the above information in the non-erased labels, the nodes do the following. First, a node that sees a non-erased label in $S$ within distance $2f+1$ from it can deduce that it is not in $S$. Second, a node for which all labels within distance $2f+1$ are not erased and are not in $S$ can deduce that it is in $S$. For a remaining node $u$ with an erased label, the distance from $S$ of the non-erased labels in its $2f+1$ neighborhood helps in deciding whether $u$ is in $S$ or not: if it has a node within this distance whose distance from $u$ differs from its distance from $S$, then $u$ is not in $S$.
At this point in the reconstruction algorithm, the only remaining case is a node $v$ which is not known to be covered by nodes in $S$ at this stage and has a non-empty set of nodes $X(v)$ within distance $f$ from it whose labels are erased. We show that such nodes cannot create distant node sequences as in the latter example, which allows $v$ to simply check if it has the smallest ID in $X(v)$, indicating whether it should be in $S$.

Formally, we refer to such nodes as \emph{alternative nodes}, defined as follows. 
Let $dist(v,u)$ be the distance between nodes $v$ and $u$, and let $B_{t}(v)$ be the set of nodes at distance at most $t$ from $v$. For a node $v$, a node $u\ne v$ is called an \emph{alternative node} if $dist(u,v)\leq f$ and there are at most $f-2$ nodes $q\in B_{f+1}(v)\cup B_{f+1}(u)$ with $dist(q,v)\ne dist(q,u)$. 
We emphasize that proving that the only remaining case of alternative nodes is technical and non-trivial.

Given any $f\geq\F$, we prove that providing each node with a single-bit indication of whether it is in the greedy $(2f+2,2f+1)$-ruling set $S$ and an additional $O(\log{f})$ bits holding its distance from $S$ is sufficient for the nodes to reconstruct $S$ in $O(f)$ rounds, despite $\F$ erasures. Due to the codes we choose, we will need $f$ to be slightly larger than $\F$. However, it is sufficient to choose $f=\Theta(\F)$, so to reconstruct the greedy ruling set, the label size needs an additive overhead of $O(\log{\F})$ and the number of rounds remains $O(\F)$.

~\\\textbf{Wrap-up:} 
We can now describe our resilient labeling scheme in full. 

In Section \ref{section: assigning labels}, we use a greedy ruling set with $f=\Theta(\F)$ to construct groups and apply an error-correcting code to the labels of their nodes. This provides the entire algorithm of the oracle for transforming an $\ell$-labeling $\varphi$ into an $\ell'$-labeling $\psi$, such that $\ell'$ has a constant multiplicative overhead for coding and an additive overhead of $O(\log{\F})$ for restoring the greedy $(2f+2,2f+1)$-ruling set.

In Section \ref{section: restoring a greedy ruling set despite f faults}, we provide a distributed algorithm for restoring the greedy $(2f+2,2f+1)$-ruling set given the labeling $\psi$ despite $\F$ label erasures, within $O(\F)$ rounds. In Section \ref{section: partitioning}, the nodes partition themselves into the same groups as the oracle, within $O(\F)$ rounds. Finally, Section \ref{section: f-resilient labeling schemes} presents our complete resilient labeling scheme, which requires an additional number of $O(\F+(\ell\cdot \F)/\log{n})$ rounds for communicating the non-erased labels within each group and decode to obtain the original labeling $\varphi$.

\subsection{Preliminaries}
\label{ssubection:preliminaries}
Throughout the paper, let $G=(V,E)$ be the network graph, which without loss of generality is considered to be connected. Let $n$ denote the number of nodes in $G$. After labels are assigned to nodes by the oracle, there is a parameter $\F$ of the number of erasures of labels that may occur. For each node $v\in V$, we denote by $ID(v)$ the unique identifier of $v$.

We denote by $dist(v,u)$ the distance between two nodes $v$ and $u$, and by $P_{v,u}$ some shortest path between them.
The ball of radius $t$ centered at a node $v$ consists of all nodes within distance $t$ from $v$ and is denoted by $B_t(v)=\{u\in V \mid dist(u,v)\leq t\}$.

Throughout the paper, we use $a\circ b$ to refer to the concatenation of two strings $a$ and $b$.

\subsection{Related Work}
\label{ssubection:related work}
As mentioned earlier, the work of Bick, Kol, and Oshman \cite{bick2022distributed} implicitly suggests a resilient labeling scheme with an $O(1)$ multiplicative overhead to the label size in $O(\F^3 \log \F +(\ell\cdot \F^3)/\log n)$ rounds. This is achieved by constructing a so-called $(k,d,c,h)$-helper assignment, which assigns $k$ helper nodes to each node $v$ in the graph, with paths of length at most $d$ from each $v$ to all of its helper nodes, such that each edge belongs to at most $c$ paths, and a node is assigned as a helper node to at most $h$ nodes $v$. The paper computes a $(k,O(k),O(k^2),O(k))$-helper assignment in $O(k\log k)$ rounds, using messages of $O(k^2\log n)$ bits. Plugging in $k=\Theta(F)$ gives the aforementioned resilient labeling scheme. For $\F=\omega(1)$, the parameters we obtain improve upon the above.

Ostrovsky, Perry, and Rosenbaum \cite{DBLP:conf/sirocco/OstrovskyPR17} introduced $t$-proof labeling schemes, which generalize the classic definition by allowing $t$ rounds for verification, and study the tradeoff between $t$ and the label size. This line of work was followed up by Feuilloley, Fraigniaud, Horvonen, Paz, and Perry \cite{feuilloley2021redundancy}, and Fischer, Oshman, and Shamir \cite{fischer2022explicit}. All of these papers provide constructions of groups of nodes that allow sharing label information to reduce the label size in this context. However, our construction is necessary for handling: (i) general graphs, (ii) in the \congest model, (iii) with arbitrary labels, and (iv) while achieving our goal complexity. It is likely that our algorithm can be applied to this task as well.
Error-correcting codes have already been embedded in the aforementioned approaches of Bick et al. \cite{bick2022distributed} and Fischer et al. \cite{fischer2022explicit}, and we do not consider their usage a novelty of our result.

Note that Feuilloley and Fraigniaud \cite{FeuilloleyF22} design error-sensitive proof-labeling schemes, whose name may seem to hint at a similar task, but are unrelated to our work: these are proof-labeling schemes where a proof that is far from being correct needs to be detected by many nodes.

\section{Assigning Labels} \label{section: assigning labels}
To allow the nodes in the distributed algorithm to decode their original labels despite $\F$ label erasures, we employ an error-correcting code in our resilient labeling scheme. Known codes can correct a certain number of errors at the cost of only a constant overhead in the number of bits. Therefore, we can split the nodes into groups of size $\Theta(\F)$ and encode the labels within each group. We include in the new labels information about the group partitioning to allow the distributed algorithm to quickly reconstruct the same groups. This information comes in the form of a ruling set, on which we base the group partitioning. To ensure that this reconstruction can be done despite label erasures, we ensure that the oracle uses a greedy ruling set. 
We present the oracle's algorithm and then prove its guarantees in Theorem \ref{theorem:code analysis}.

~\textbf{Overview:} In Algorithm \ref{alg: assigning labels}, the oracle first constructs a greedy $(2f+2,2f+1)$-ruling set $S$. For every node $v$, let $b(v)$ be a single bit indicating whether $v\in S$, and let $distS(v)$ be an $O(\log{f})$-bit value which is the distance between $v$ and $S$.

The oracle then partitions the nodes into groups of size $\Theta(f)$. To later allow the distributed algorithm to reconstruct the same groups, the oracle uses the same algorithm the nodes will later use, i.e., Algorithm \ref{alg: partitioning}. However, for the description in this section, any partitioning with the same guarantees would give the desired result in Theorem \ref{theorem:code analysis} below. 

Finally, the oracle applies an error-correcting code to each index of the labels of all nodes in each group. The new label it assigns to each node $v$ is a concatenation of its $b(v)$ and $distS(v)$ values, along with the encoded label.

\begin{algorithm}[htbp]
\caption{Oracle's algorithm for assigning new labels}\label{alg: assigning labels}
\setcounter{AlgoLine}{0}
\KwIn{A graph $G$, a parameter $\F$, and an $\ell$-labeling $\varphi:V\rightarrow \{0,1\}^\ell$.}
\KwOut{An $\ell '$-labeling $\psi:V\rightarrow \{0,1\}^{\ell '}$.}
\BlankLine
    Construct a greedy $(2f+2,2f+1)$-ruling set $S$. For every node $v$, denote by $distS(v)$ its distance to $S$\label{oraclestep:ruling}.
    
    Simulate an execution of Algorithm \ref{alg: partitioning} with $b(v)=1$ for every node $v\in S$ and $b(v)=0$ for every node $v\notin S$. At the end of the algorithm, every node $v$ is associated with a group identifier $Group(v)$\label{oraclestep:partition}.
    
    For every group $Group$ obtained in the previous step, let $(v_1,\dots,v_s)$ be the IDs of its nodes in increasing order. For every $1\leq j \leq \ell$, concatenate the $j$-th bit of the labels $\varphi(v)$ of the nodes according to their increasing to obtain a string ${\tt{r}}^j$, and encode this string into a codeword ${\tt{w}}^j$ according to the code ${\tt{C}}$. Split the bits of ${\tt{w}}^j$ into $s$ consecutive \emph{blocks} $({\tt{w}}^j_1,\dots,{\tt{w}}^j_s)$ of sizes $\ceil{|{\tt{w}}^j|/s}$ or $\floor{|{\tt{w}}^j|/s}$. For every $1\leq i \leq s$, let ${\tt{w}}_i={\tt{w}}^1_i,\dots,{\tt{w}}^{\ell}_i$ and set $\psi(v_i)=b(v_i)\circ distS(v_i)\circ {\tt{w}}_i$\label{oraclestep:encode}.
\end{algorithm}

\textbf{Choice of code:} 
Different codes ${\tt{C}}$ result in different overheads of $\ell'$ compared to $\ell$. A simple repetition code takes the string ${\tt{info}}$ of length $s$ and repeats it for $s$ blocks to create the resulting codeword ${\tt{w}}={\tt{C}}({\tt{info}})$, which is trivially decodable given any $s-1$ blocks erasures. Splitting ${\tt{w}}$ back to $s$ pieces implies that ${\tt{w}}_i$ equals ${\tt{info}}$ for every node $v_i$ in the group. Taking $f=\F$, 
this results in $\ell'=1+O(\log{f})+s\cdot\ell=O(\log{\F})+s\cdot\ell$. Since the group size $s$ is always $\Theta(f)=\Theta(\F)$, we get a multiplicative overhead of $O(\F)$ and an additive overhead of $O(\log{\F})$.
To reduce the multiplicative overhead to $O(1)$, we use a code with better parameters. Formally, a binary code ${\tt{C}}$ maps strings of $k$ bits into strings of $N$ bits called \emph{codewords}. The ratio $\rho=N/k$ is the code's \emph{rate}. The \emph{relative distance}, $\delta$, ensures the Hamming distance (number of indices where the strings differ) between any two codewords is at most $\delta N$. Such a code can correct $\delta N - 1$ erasures. 

\begin{lemma}[{Justesen Codes \cite{justesen1972class}, phrasing adopted from \cite{AshkenaziGL22}}]\label{lemma: coding}
For any given rate $\rho < \frac{1}{2}$ 
and for any $M >0$, there exists a binary code ${\tt{C}}_M :\{0, 1\}^{k_M=\rho_M\cdot N_M}\rightarrow\{0, 1\}^{N_M}$ with $N_M=2M(2M-1)$ such that the code ${\tt{C}}_M$ has rate $\rho_M\geq\rho$ and relative distance $\delta_M >(1-2\rho)H^{-1}(1/2)$, where $H(x)=x\log (1/x)+(1-x)\log (1/(1-x))$ is the binary entropy function. Encoding and decoding can be done in polynomial time.
\end{lemma}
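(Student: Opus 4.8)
The statement is the classical theorem of Justesen \cite{justesen1972class}, so the plan is to reconstruct his explicit \emph{concatenated}-code construction and then verify the three claimed quantities (block length, rate, and relative distance) together with polynomial-time encoding and decoding. The two building blocks are an outer Reed--Solomon code and an inner Wozencraft ensemble. Concretely, I would fix a field $\mathbb{F}_{2^m}$, take the outer code to be a Reed--Solomon code of length $N_{\mathrm{out}}=2^m-1$ and dimension $K_{\mathrm{out}}$ over $\mathbb{F}_{2^m}$, and encode its $i$-th symbol with a \emph{distinct} member of the Wozencraft ensemble, namely the rate-$1/2$ binary linear map $x\mapsto (x,\alpha_i x)$ of length $2m$, where $\alpha_1,\dots,\alpha_{N_{\mathrm{out}}}$ are the distinct nonzero elements of $\mathbb{F}_{2^m}$. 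The resulting binary code has length $2m\cdot N_{\mathrm{out}}$; choosing $m$ and $K_{\mathrm{out}}$ as functions of $M$ and $\rho$ (possibly with shortening or padding to hit the exact length) realizes the stated $N_M$ and a rate $\rho_M\ge\rho$.

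The easy parameters come first. Each outer symbol carries $m$ bits and is mapped to $2m$ bits, so the inner rate is exactly $1/2$ and the overall rate is $\rho_M=(K_{\mathrm{out}}/N_{\mathrm{out}})\cdot\tfrac12<\tfrac12$, matching the hypothesis $\rho<\tfrac12$; taking $K_{\mathrm{out}}/N_{\mathrm{out}}$ close to $2\rho$ gives $\rho_M\ge\rho$. The block length is read off directly from the two layers.

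The main obstacle is the relative-distance bound $\delta_M>(1-2\rho)H^{-1}(1/2)$, which is exactly where the Wozencraft ensemble is essential: no single inner code need be good, but the ensemble is good on average. I would lower-bound the Hamming weight of an arbitrary nonzero codeword as follows. Since Reed--Solomon is MDS, any nonzero outer codeword has at least $N_{\mathrm{out}}-K_{\mathrm{out}}+1=(1-2\rho_M)N_{\mathrm{out}}+1$ nonzero symbols $x_i$. For each such coordinate the inner codeword is the nonzero point $(x_i,\alpha_i x_i)\in\mathbb{F}_{2^m}^2$, and because the $\alpha_i$ are distinct and the $x_i$ nonzero, these points are pairwise distinct. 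The crux is then a counting argument: the number of length-$2m$ binary strings of weight below $\theta\cdot 2m$ is at most $2^{H(\theta)\cdot 2m}$, which for any $\theta<H^{-1}(1/2)$ satisfies $2^{H(\theta)\cdot 2m}=o(2^m)=o(N_{\mathrm{out}})$. Hence all but a vanishing fraction of the at least $(1-2\rho_M)N_{\mathrm{out}}$ nonzero coordinates contribute weight at least $H^{-1}(1/2)\cdot 2m$ each, so the total weight is at least $(1-2\rho_M)H^{-1}(1/2)\cdot N_M$ up to lower-order terms, yielding the claimed $\delta_M$. I expect the delicate point to be making this averaging uniform over \emph{all} nonzero outer codewords simultaneously and absorbing the lower-order terms into the strict inequality.

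Finally I would verify efficiency. Reed--Solomon admits polynomial-time encoding and polynomial-time erasure decoding (e.g.\ by interpolation or Berlekamp--Welch), and each inner code has length $2m=O(\log N_M)$, so it can be decoded by exhaustive search over its $2^m=\mathrm{poly}(N_M)$ codewords. Composing inner decoding (to recover outer symbols, declaring an outer erasure when the inner block is corrupted) with outer erasure decoding gives an overall polynomial-time decoder, while polynomial-time encoding is immediate from the linearity of both layers.
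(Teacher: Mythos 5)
This lemma is not proved in the paper at all: it is imported verbatim from Justesen's 1972 construction (with phrasing taken from \cite{AshkenaziGL22}), so there is no in-paper argument to compare against. Your reconstruction is the standard and correct proof of that classical theorem: Reed--Solomon outer code over $\mathbb{F}_{2^m}$ concatenated with the Wozencraft ensemble $x\mapsto(x,\alpha_i x)$, rate $\rho_M=\tfrac12\cdot K_{\mathrm{out}}/N_{\mathrm{out}}$, and the distance bound via the observation that the nonzero inner codewords arising from a nonzero outer codeword are pairwise distinct, so at most $2^{H(\theta)\cdot 2m}<2^m$ of them can have relative weight below $\theta<H^{-1}(1/2)$. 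Two small remarks. First, the worry you flag about making the averaging ``uniform over all nonzero outer codewords'' is a non-issue: the counting argument is applied separately to each fixed nonzero outer codeword (it bounds how many of \emph{its} nonzero coordinates can receive a low-weight inner encoding), so no union bound is needed; the only genuine bookkeeping is absorbing the $o(1)$ loss into the strict inequality, which works for $m$ large enough (small $M$ can be handled by padding or by noting the bound is vacuous there). Second, your construction yields block length $2m(2^m-1)$, which matches the formula $N_M=2M(2^M-1)$ used in the paper's Appendix (proof of Claim~\ref{claim:N_M}); the expression $2M(2M-1)$ in the lemma statement itself appears to be a typo, and your derivation confirms the exponential form is the intended one.
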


Given the bound $\F$ on the number of label erasures, we need to choose a value of $f$ to work with. Then, given a group of size $s$ such that $f+1\leq s\leq 3f+1$, we need to choose values for $\rho$ and $M$ that determine the code ${\tt{C}}_M$. We set these parameters as follows. We let $f=80\F$, and for every such $s$ we choose $\rho=1/4$. To choose $M$, notice that Lemma \ref{lemma: coding} gives that for any $M>0$, the number of bits we can encode is $k_M=\rho_M N_M$. We choose the smallest $M$ such that $k_M$ is at least $s$.

The next theorem proves that the overhead for the number of bits held by each node per coded bit is constant and that, given the coded blocks,it is possible to decode the original string of bits if up to $\F$ blocks are erased. The proof is presented in Appendix \ref{app:proofs}.

\begin{theorem}
\label{theorem:code analysis}
Let $f=80\F$ and let $\rho=1/4$.
Let $(v_1,\dots,v_s)$ be the sequence of nodes in a group of size $s$, such that $f+1\leq s\leq 3f+1$, ordered by increasing IDs. Let ${\tt{r}}$ be a string obtained by concatenating a single bit associated with each $v_i$ according to their order, and let ${\tt{w}}$ be a codeword obtained in Step \ref{oraclestep:encode} of Algorithm \ref{alg: assigning labels}, by using a code ${\tt{C}}_M$ from Lemma \ref{lemma: coding}, with a value of $M$ that is the smallest such that $k_M\geq s$. Then,
\begin{enumerate}
    \item it holds that $\ceil{|{\tt{w}}|/s} \leq 80$, and 
    \item the string ${\tt{r}}$ is decodable given ${\tt{w}} = ({\tt{w}}_1,\dots,{\tt{w}}_s)$ with up to $\F$ erasures of blocks ${\tt{w}}_i$.
\end{enumerate}
\end{theorem}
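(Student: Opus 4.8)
The plan is to treat the two claims separately: part~(1) will rest on the minimality of the chosen $M$, and part~(2) on controlling the \emph{fraction} of erased bits rather than their absolute number.

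For part~(1), I would first pin down that the chosen $M$ is not tiny. Since $k_M \geq s \geq f+1 \geq 81$ and any code is injective (so $N_M \geq k_M$), we get $N_M = 2M(2M-1) \geq 81$, which forces $M \geq 5$. I would then use minimality: by definition of the smallest $M$ with $k_M \geq s$ we have $k_{M-1} < s$, and since $\rho_{M-1} \geq \rho = 1/4$ (Lemma~\ref{lemma: coding}) this gives $N_{M-1} \leq 4 k_{M-1} < 4s$. Finally a direct computation shows the ratio $N_M/N_{M-1} = \frac{M(2M-1)}{(M-1)(2M-3)}$ is bounded above, for every $M \geq 2$, by its value $6$ at $M=2$, so that $N_M < 6 \cdot 4 s = 24 s$ and hence $\ceil{|{\tt{w}}|/s} = \ceil{N_M/s} \leq 24 \leq 80$. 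The point requiring care is that one cannot simply write $N_M \leq 4 k_M$ and use $k_M \approx s$, because $k_M$ may overshoot $s$ by the jump between consecutive feasible values of $M$; bounding $N_{M-1}$ via minimality is what keeps $N_M/s$ constant.

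For part~(2), the key idea is that erasing $\F$ of the $s$ blocks deletes at most $\F \ceil{N_M/s}$ of the $N_M$ bits of ${\tt{w}}$, so the erased \emph{fraction} is about $\F/s$, which is small because $s > f = 80\F$. Concretely, using $\ceil{N_M/s} \leq N_M/s + 1$ together with $N_M \geq k_M \geq s$, the number of erased bits is at most $\F(N_M/s + 1) \leq 2\F N_M/s < 2\F N_M/(80\F) = N_M/40$. On the correction side, $\rho = 1/4$ gives relative distance $\delta_M > (1-2\rho)H^{-1}(1/2) = \tfrac{1}{2} H^{-1}(1/2) > 0.05$ (since $H(0.1) < 1/2$ forces $H^{-1}(1/2) > 0.1$), so the code corrects at least $\delta_M N_M - 1 > 0.05\, N_M - 1$ erasures. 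As $N_M \geq 81$, we have $N_M/40 < 0.05\, N_M - 1$, so the number of erased bits stays strictly below the correction capacity and ${\tt{r}}$ is recovered.

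The main obstacle, and the reason the specific constants $f = 80\F$ and $\rho = 1/4$ appear, is the tension inside part~(2): a naive bound on the erased bits through the per-block size from part~(1) (a constant of order $24$) would yield roughly $24\F$ erased bits, which can exceed the correction capacity $\delta_M N_M \approx 0.05\, N_M$ precisely when $N_M$ is as small as $\approx s \approx 80\F$. The fix is to bound the erased fraction $\F\ceil{N_M/s}/N_M \leq 2\F/s \leq 1/40$ directly and compare it to $\delta_M$; the choice $f = 80\F$ makes $1/40$ comfortably below the available relative distance $>0.05$, leaving ample slack to absorb both the additive $-1$ in the correction guarantee and the rounding in the block sizes.
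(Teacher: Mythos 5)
Your proposal is correct and follows essentially the same route as the paper: part (1) via minimality of $M$ (so $N_{M-1}\leq 4k_{M-1}<4s$) combined with a constant bound on the ratio $N_M/N_{M-1}$, and part (2) by showing the erased-bit fraction is at most $1/40$ while the relative distance exceeds $1/20$. The only differences are in bookkeeping of constants (you obtain $\ceil{|{\tt{w}}|/s}\leq 24$ rather than $80$, and you handle the uneven block sizes via $\ceil{N_M/s}\leq 2N_M/s$ instead of the paper's worst-case ratio $x/(80x-79)$), which if anything makes the argument slightly cleaner.
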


For later use by the distributed algorithm in our resilient labeling scheme, we denote by $decode(v,\{(u,{\tt{w}}(u)) \mid u \in Group(v)\})$ the function that takes as input a node $v$ and the values ${\tt{w}}(u)$ for all the nodes $u$ in $Group(v)$ (along with their IDs so that they can be ordered correctly) and produces the bit of $v$ in the string ${\tt{r}}$ which was encoded into ${\tt{w}}$, by Theorem \ref{theorem:code analysis}.

\section{Restoring a Greedy Ruling Set Despite Erasures}\label{section: restoring a greedy ruling set despite f faults}
In this section we present a distributed algorithm that restores a greedy ruling set despite label erasures. Nodes are \emph{faulty} if their label are erased, and \emph{non-faulty} otherwise. Each node $v$ aims to restore its value $b(v)$ so that the set $\{v \mid b(v)=1\}$ is exactly the given ruling set $S$. Once a faulty node $v$ sets its value $b(v)$, it is no longer considered faulty for the remainder of the algorithm. 
The number of label erasures our algorithm can tolerate is denoted by $f$, and in a setting with at most $\F$ label erasures, the algorithm can be executed correctly for any value $f\geq \F$.

\begin{theorem}\label{theorem:black or white}
    Let $G = (V, E)$ be a graph and let $S$ be a greedy $(2f+2, 2f+1)$-ruling set $S$. Suppose that, apart from up to at most $f$ faulty nodes, each node $v$ is provided with a label $\zeta(v)=b(v)\circ distS(v)$, where $b(v)=1$ if $v\in S$ and $b(v)=0$ otherwise, and $distS(v)=\min_{u\in S}\{dist(v,u)\}$. Then, there exists a deterministic \congest algorithm at the end of which every node $v$ restores its value $b(v)$. The algorithm runs in $O(f)$ rounds.
\end{theorem}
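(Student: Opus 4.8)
\textbf{Proof plan for Theorem~\ref{theorem:black or white}.}
The plan is to follow the informal reconstruction strategy already outlined in the Technical Overview, organizing the algorithm into a sequence of deduction phases and proving that after these phases the only undecided nodes are handled by the alternative-node criterion. First, every node gathers the labels $\zeta(u)=b(u)\circ distS(u)$ of all non-faulty nodes within distance $2f+1$; since each node only needs information from its $(2f+1)$-ball, this collection phase runs in $O(f)$ rounds in \congest, with the label sizes being $O(\log f)$ bits so congestion is not an issue beyond the $O(f)$ radius. The deductions then proceed as follows: (i) a node $v$ that sees some non-faulty $u$ with $b(u)=1$ at distance at most $2f+1$ concludes $b(v)=0$; (ii) a node all of whose $(2f+1)$-ball labels are non-faulty and equal to $0$ concludes $b(v)=1$, since by the ruling-set covering property every node must have a member of $S$ within distance $2f+1$; (iii) a faulty node $v$ that sees a non-faulty $u$ in its $(2f+1)$-ball with $dist(v,u)\ne distS(u)$ — meaning $u$'s closest $S$-node is not reached through $v$'s position in a way consistent with $v\in S$ — can conclude $b(v)=0$.

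The heart of the argument, and the step I expect to be the main obstacle, is the final case: a faulty node $v$ that is not resolved by (i)–(iii) and has a nonempty set $X(v)$ of faulty nodes within distance $f$. For these I would prove the key structural claim that justifies the greedy choice, namely that $v$ should set $b(v)=1$ if and only if $v$ has the smallest ID in $X(v)$. To establish this I would argue in two directions. For soundness, I would show that any two unresolved faulty nodes $u,v$ that both survive phases (i)–(iii) and lie within distance $f$ of each other must be \emph{alternative nodes} of one another in the sense of the definition in the overview: the fact that neither could be ruled out by the distance test (iii) forces that almost all nodes in $B_{f+1}(v)\cup B_{f+1}(u)$ are equidistant from $u$ and $v$, leaving at most $f-2$ distinguishing witnesses. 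This is exactly the configuration where the greedy algorithm's tie-break by ID governs membership, so comparing IDs within $X(v)$ correctly reproduces the greedy decision. For completeness, I would show that the troublesome long chains $v_1,\dots,v_f$ from the overview cannot arise among unresolved nodes, because any such chain would require pairwise distances that violate the alternative-node condition or would have been broken by the $distS$ test in phase (iii).

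The most delicate part is verifying that the alternative-node relation is the \emph{only} remaining obstruction — i.e., that every pair of mutually-unresolvable faulty nodes is captured by the bound of at most $f-2$ distinguishing nodes — and that within $X(v)$ the greedy order is faithfully recovered by ID comparison despite the fact that $v$ cannot directly observe the erased labels of other members of $X(v)$. Here I would lean on the counting argument sketched in the overview: with at most $f$ total erasures and at least $f-1$ nodes distinguishing any genuine non-alternative pair, at least one distinguishing node is non-faulty and thus already resolved, which reveals the correct choice; the remaining mutually-alternative nodes form a set small and local enough that every unresolved node sees all of $X(v)$ and can simply take the ID-minimum. I would then tally the round complexity: the collection phase is $O(f)$ and all deductions are local computations on already-collected information, plus a constant number of additional $O(f)$-round exchanges to propagate the newly-decided $b$-values to any node that needed them, giving the claimed $O(f)$ total. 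I expect the formal statement and proof of the structural claim about alternative nodes to require the most care, and it is where I would spend the bulk of the write-up.
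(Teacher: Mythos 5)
Your sequence of deductions matches the structure of the paper's algorithm, but two of your steps fail as stated, and the part you defer is where essentially all of the work lies.

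First, the opening move --- ``every node gathers the labels of all non-faulty nodes within distance $2f+1$'' in $O(f)$ rounds --- is a \local-model argument, not a \congest one. The ball $B_{2f+1}(v)$ can contain $\Omega(n)$ nodes (take $v$ with a single neighbor of degree $n-1$), so funneling all of their identified labels to $v$ across that one edge costs $\Omega(n/\log n)$ rounds no matter how short each individual label is; congestion is governed by the number of messages crossing an edge, not their size. The paper never collects balls: the only information ever flooded originates from the at most $f$ faulty nodes (their IDs, possibly with a hop counter), from $S$-nodes as a single anonymous bit, or as echo replies tagged with a faulty node's ID, and each node forwards each of the at most $f$ distinct messages once, so pipelining gives $O(f)$ rounds per phase. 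Your deductions must be re-expressed as such broadcasts and echoes rather than as local computations on a fully collected neighborhood.

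Second, your test (iii) is unsound at radius $2f+1$. The implication ``if $v\in S$ then every non-faulty $u$ near $v$ has $distS(u)=dist(v,u)$'' holds only for $u\in B_{f+1}(v)$: there, any other $S$-node is at distance at least $(2f+2)-(f+1)=f+1\ge dist(v,u)$ from $u$, forcing $distS(u)=dist(v,u)$. For $f+1<dist(v,u)\le 2f+1$ a second $S$-node can be strictly closer to $u$ than $v$ is while still being at distance at least $2f+2$ from $v$, so $distS(u)\ne dist(v,u)$ does not certify $v\notin S$, and your rule would erase genuine members of $S$. The paper runs this test only inside $B_{f+1}(u)$, and the radius is not cosmetic: the alternative-node analysis of the final phase counts distinguishing witnesses in $B_{f+1}(v)\cup B_{f+1}(u)$ and uses precisely the fact that a node surviving the radius-$(f+1)$ test has at most $f-2$ non-faulty distinguishing witnesses there. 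Finally, you correctly flag the structural claim (every pair of still-unresolved faulty nodes within distance $f$ are alternative nodes, and the min-ID rule is sound and complete via the greedy property) as the crux, but you only restate it; the paper's proof is a nontrivial shortest-path case analysis, plus a separate argument that every surviving $u\in S$ has its covering $S$-node inside $B_f(u)$ and still faulty (hence inside $X(u)$), neither of which follows from the counting heuristic you invoke. As written, the proposal is a plan with the two concrete errors above and the central lemma left unproved.
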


Before presenting and analyzing our algorithm for Theorem \ref{theorem:black or white}, we recall the definition of alternative nodes.  

\begin{definition} [Alternative Nodes] \label{def: alternative node}
    For a node $v$, a node $u\ne v$ is called an \emph{alternative node}, if $dist(u,v)\leq f$ and there are at most $f-2$ nodes $q\in B_{f+1}(v)\cup B_{f+1}(u)$ with $dist(q,v)\ne dist(q,u)$. Denote by $M(v)$ the set of alternative nodes for $v$.
\end{definition}

A key tool in our analysis is the proof that a greedy ruling set $S$ excludes nodes $v$ with alternative nodes $u\notin S$ having smaller ID. In Appendix \ref{app:proofs}, we prove the following.

\begin{lemma}\label{lemma: greedy}
    Let $S$ be a greedy $(2f+2,2f+1)$-ruling set. Then, there exists no node $v\in S$ for which there is an alternative node $u\notin S$ with $ID(u)<ID(v)$.
\end{lemma}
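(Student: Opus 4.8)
The plan is to prove \Cref{lemma: greedy} by contradiction, leveraging the structure of the greedy construction of $S$. Suppose toward a contradiction that there exists a node $v\in S$ admitting an alternative node $u\notin S$ with $ID(u)<ID(v)$. Since the greedy ruling set is built by scanning nodes in increasing order of ID, adding a node to $S$ and excluding everything within distance $2f+1$, the fact that $u$ has the smaller ID means $u$ was considered before $v$. A node ends up outside $S$ only because, at the moment it was scanned, it had already been excluded by some earlier node $w\in S$ with $ID(w)<ID(u)$ and $dist(w,u)\leq 2f+1$. The goal is to show that this same $w$ would also have excluded $v$ before $v$ could ever be added to $S$, contradicting $v\in S$.

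The heart of the argument is therefore to translate the ``alternative node'' condition, which is about relative distances within the balls $B_{f+1}(v)\cup B_{f+1}(u)$, into a statement about absolute distances, specifically that $dist(v,w)\leq 2f+1$. First I would record that $dist(u,v)\leq f$ by the definition of an alternative node. Next, consider the excluding witness $w$ with $dist(w,u)\leq 2f+1$ and take a shortest path $P_{w,u}$ from $w$ to $u$. Intuitively, the alternative-node condition forces $u$ and $v$ to be ``distance-equivalent'' as seen from almost all nearby nodes, so one should be able to route from $w$ toward $u$ and then divert to $v$ while only paying the short extra distance $dist(u,v)\leq f$, yielding $dist(v,w)\leq dist(w,u)+dist(u,v)$; but the naive triangle inequality gives $dist(v,w)\leq (2f+1)+f=3f+1$, which is too weak to guarantee exclusion within radius $2f+1$. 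The real work is to sharpen this using the scarcity of nodes that distinguish $u$ from $v$.

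The key step I expect to be the main obstacle is exactly this sharpening: using that at most $f-2$ nodes $q\in B_{f+1}(v)\cup B_{f+1}(u)$ have $dist(q,v)\neq dist(q,u)$ to conclude a much tighter bound on $dist(v,w)$, ideally $dist(v,w)\leq 2f+1$. The plan here is to examine the shortest path from $w$ toward the pair $\{u,v\}$ and argue that, because the two balls of radius $f+1$ around $u$ and $v$ agree on distances except for a small set of size at most $f-2$, the path must reach a node that is equidistant (or nearly so) to both $u$ and $v$ well before exhausting its length; from such an equidistant node one reaches $v$ and $u$ by the same number of steps. Concretely, I would look at the sequence of distance-differences $dist(\cdot,v)-dist(\cdot,u)$ along nodes of $P_{w,u}$ as they enter the relevant balls: this quantity changes by at most one per step, starts near $-dist(u,v)$ region and must pass through a balanced value, and since only $f-2$ nodes in the union of balls can break the equality $dist(q,v)=dist(q,u)$, the path cannot drift far without hitting an equidistant node. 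Turning this continuity-plus-counting intuition into a clean inequality $dist(v,w)\leq 2f+1$ is the technically delicate part.

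Once $dist(v,w)\leq 2f+1$ is established, the contradiction is immediate and the proof concludes cleanly. Since $ID(w)<ID(u)<ID(v)$, the node $w$ is scanned strictly before $v$ in the greedy order, and $w\in S$ excludes every node within distance $2f+1$ from itself; as $dist(v,w)\leq 2f+1$, the node $v$ would have been excluded from $S$ and ignored in all later iterations, so $v\notin S$, contradicting the assumption $v\in S$. Hence no such pair $(v,u)$ exists, proving the lemma. I would write the distance-sharpening as a standalone inequality (possibly an auxiliary claim) so that the final greedy-ordering argument reads in one or two lines.
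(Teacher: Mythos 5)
Your overall reduction is sound and is in substance the same as the paper's: the paper's proof of Lemma \ref{lemma: greedy} also hinges on showing that for alternative nodes $u,v$ and any third node $w$, $dist(v,w)\leq 2f+1$ holds if and only if $dist(u,w)\leq 2f+1$, and then concludes (phrased via ``swapping $u$ for $v$ still yields a ruling set'' rather than via the excluding witness $w$, but these are equivalent) that the greedy order would have inserted $u$ into $S$ before ever reaching $v$. The genuine gap is that you leave the one step carrying all the content --- sharpening the naive bound $dist(v,w)\leq 3f+1$ down to $dist(v,w)\leq 2f+1$ --- as an acknowledged ``delicate part,'' and the mechanism you sketch for it does not work as stated. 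An intermediate-value argument on $dist(\cdot,v)-dist(\cdot,u)$ along $P_{w,u}$ is not guaranteed to find a balanced node: that difference equals $+dist(u,v)>0$ at $u$, and under the hypothesis you are trying to refute ($dist(v,w)\geq 2f+2>dist(u,w)$) it is also positive at $w$, so there is no sign change to exploit.

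The correct completion is a two-case counting argument rather than a continuity argument. Suppose $dist(u,w)\leq 2f+1$ but $dist(v,w)\geq 2f+2$. If some node $t$ on a shortest $u$--$w$ path had $dist(t,u)=dist(t,v)$, then $dist(v,w)\leq dist(v,t)+dist(t,w)=dist(u,t)+dist(t,w)=dist(u,w)\leq 2f+1$, a contradiction; hence every internal node of the path distinguishes $u$ from $v$. On the other hand, $2f+2\leq dist(v,w)\leq dist(v,u)+dist(u,w)\leq f+dist(u,w)$ forces $dist(u,w)\geq f+2$, so the path has at least $f+1$ internal nodes at distance at most $f+1$ from $u$; all of these lie in $B_{f+1}(u)$ and distinguish $u$ from $v$ (none of them can be $v$, since $v$ on the path would give $dist(v,w)<dist(u,w)$), which exceeds the budget of $f-2$ in Definition \ref{def: alternative node}. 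This is exactly the count the paper performs; once you add it, your excluding-witness argument closes the proof.
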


We are now ready to prove Theorem \ref{theorem:black or white}. We begin by presenting the algorithm. 

\textbf{Overview.} In Algorithm \ref{alg: restoring greedy ruling set}, nodes restore a greedy ruling set $S$ as follows. Before the algorithm starts, each non-faulty node $w$ sets its value $b(w)$ to the first bit of its label. At the beginning of the algorithm, non-faulty nodes in $S$ broadcast a bit ``1" for $2f+1$ rounds. Here, every node that receives a bit ``1" during these rounds forwards it to all of its neighbors. 
Faulty nodes $u$ that receive a bit during these rounds sets $b(u)=0$, as this indicates that $u\in B_{2f+1}(v)$ for some such $v\in S$ and thus $u\notin S$. 

Next, each faulty node propagates its ID through a distance of $2f+1$. Faulty nodes $u$ that do not receive any other ID set $b(u)=1$, as this means no other faulty nodes are in $B_{2f+1}(u)$ and thus $u\in S$. 

Now, every remaining faulty node $u$ propagate a message of the form $ID(u)\circ dist$ through a distance of $f+1$, where $dist$ starts at 0 and is increased by 1 by each node along the way, corresponding to the length of the path. For each node $w$, let $dist_u(w)$ be the smallest $dist$ value received for $u$, representing the actual distance between $w$ and $u$.

Each node $w\in B_{f+1}(u)$ receives the message and checks if it is possible that $u\in S$ by verifying whether $distS(w)=dist_u(w)$. If this equality does not hold, $w$ propagates a message with $ID(u)$ through a distance of $f+1$. Since $dist(w,u)\leq f+1$, if $u\in S$, then $distS(w)=dist_u(w)$. If $w$ propagates $ID(u)$, then this is not the case, so when $u$ receives its own ID, it can conclude that it is not in $S$ and sets $b(u)=0$.

Now, we are left with faulty nodes $v\in S$ and faulty nodes $u \notin S$. We will prove that $u$ is an alternative node for a faulty node $v\in S$. Thus, every faulty node $v\in S$ must have the lowest ID among faulty nodes in $B_{f+1}(v)$ and can deduce that it is in $S$ and set $b(v)=1$. Similarly, every faulty node $u\notin S$ must have a faulty node $v\in S$ within $B_{f+1}(u)$ such that $ID(v)<ID(u)$, allowing $u$ to deduce it is not in $S$ and set $b(u)=0$.

Following is the formal algorithm and proof.

\begin{algorithm}[htbp]
 \caption{Restoring a greedy $(2f+2,2f+1)$-ruling set $S$ despite $f$ label erasures}\label{alg: restoring greedy ruling set}
\setcounter{AlgoLine}{0}
\KwIn{A graph $G$ with a greedy ruling set $S$, where each node $w$ is provided with $\zeta(w)=b(w)\circ distS(w)$, except at most $f$ faulty nodes which are provided with $\zeta(w)=empty$.}
\KwOut{Every node $w$ outputs $b(w)$ such that $\{w \mid b(w)=1\}=S$.}
\BlankLine
    Every non-faulty node $w\in V$ sets $b(w)$ to the first bit in $\zeta (w)$\label{restorestep:1}.
    
    Every node $v\in S$ with $\zeta(v)\ne empty$ broadcasts a bit ``1" for $2f+1$ rounds\label{restorestep:2}.
    
    Every faulty node $u$ that receives a bit ``1" during Step \ref{restorestep:2} sets $b(u)=0$\label{restorestep:3}.
    
    Every faulty node $u$ propagates a message with $ID(u)$ through a distance of $2f+1$. Nodes that have already broadcast $ID(u)$ do not repeat the broadcast\label{restorestep:4}.
    
    Every faulty node $u$ which does not receive any other ID during Step \ref{restorestep:4} sets $b(u)=1$\label{restorestep:5}.
    
    Every faulty node $u$ propagates a message of the form $ID(u) \circ dist$ through a distance of $f+1$, where $dist$ starts at 0 and is increased by 1 by any node along the way. A node does not propagate a message with some ID if it has already done so with a smaller value of $dist$. For each node $w$, let $dist_u(w)$ be the smallest $dist$ value it receives for node $u$\label{restorestep:6}.
    
    Every node $w\in B_{f+1}(u)$ with $\zeta(w)\ne empty$ receives a message during Step \ref{restorestep:6} and checks if $distS(w) = dist_u(w)$. If the equality does not hold, then $w$ propagates a message with $ID(u)$ through a distance of $f+1$. Nodes that have already broadcast $ID(u)$ do not repeat the broadcast\label{restorestep:7}.
    
    Every faulty node $u$ that receives $ID(u)$ during Step \ref{restorestep:7} sets $b(u)=0$.\label{restorestep:8}
    
    Every faulty node $u$ propagates a message with $ID(u)$ through a distance of $f$. \label{restorestep:9}
    
    Let $X(u)$ be the set of nodes whose IDs are received by $u$ during Step \ref{restorestep:9}. If $ID(u) = \min_{u'\in X(u)}\{ID(u')\}$, then $u$ sets $b(u)=1$. Otherwise, it sets $b(u)=0$\label{restorestep:10}.
\end{algorithm}

\begin{proof}[Proof of Theorem \ref{theorem:black or white}]\label{proof of theorem black or white}
    To prove that the algorithm restores the set $S$, we prove that every node $w\in V$ holds $b(w)$ at the end of Algorithm \ref{alg: restoring greedy ruling set}, such that $\{w \mid b(w)=1\}=S$. By Step \ref{restorestep:1}, this holds for all non-faulty nodes, so it remains to prove it for faulty nodes. In what follows, we go over each step in the algorithm where nodes may set their output (Steps \ref{restorestep:3},\ref{restorestep:5},\ref{restorestep:8}, and \ref{restorestep:10}), and we show that any node which sets its output during that step does so correctly.

    \textbf{Step \ref{restorestep:3}.} In Step \ref{restorestep:2}, every non-faulty node $v\in S$ broadcasts a bit ``1" for $2f+1$ rounds. By the definition of a $(2f+2,2f+1)$-ruling set, every two nodes $v',v''\in S$ satisfy that $dist(v',v'')\geq 2f+2$. Thus, if a faulty node $u$ receives a bit ``1" from $v\in S$ during these $2f+1$ rounds, it implies that $u\in B_{f+1}(v)$. Hence, $u\notin S$ since $v\in S$. So, for any node $u$ that sets $b(u)=0$ in Step \ref{restorestep:3}, its output is correct.

    \textbf{Step \ref{restorestep:5}.} Afterwards, in Step \ref{restorestep:4}, each faulty node propagates a message with its ID through a distance of $2f+1$. Every faulty node $u$ that does not receive any other ID during these rounds and did not receive a ``1" during Step \ref{restorestep:2} sets $b(u)=1$, because this implies that there is no other faulty node in $B_{2f+1}(u)$ and hence it must hold that $u\in S$. So, it is correct that $u$ sets $b(u)=1$ in Step \ref{restorestep:5}.

    \textbf{Step \ref{restorestep:8}.} Then, in Step \ref{restorestep:6}, every node $u$ that is still faulty propagates a message of the form $ID(u)\circ dist$ through a distance of $f+1$, where $dist$ starts at $0$ and is increased by $1$ before the message is propagated further. This corresponds to the length of the path that this message traverses. 
    Every node $w\in B_{f+1}(u)$ receives the above message and checks if it is possible that $u\in S$ by verifying whether $distS(w)=dist_u(w)$. We denote by $dist(u)$ the value of $dist$ from the message $ID(u)\circ dist$ that $w$ receives. If equality does not hold, then $w$ propagates a message with $ID(u)$ through a distance of $f+1$. Since $dist(u,w)\leq f+1$, if $u\in S$, then for every other node $v'\in S$ such that $dist(w,v')\leq f+1$, it holds that $2f+2\leq dist(u,v')\leq dist(u,w) +dist(w,v')\leq f+1+dist(w,v')$. Thus, $dist(w,v')\geq f+1$, and since $distS(w)=\min_{v\in S}\{dist(q,v)\}$ we have $distS(w)=dist_u(w)$. But if $w$ propagates $ID(u)$, then this is not the case, so $u\notin S$. Thus, it is correct that $u$ sets $b(u)=0$ in Step \ref{restorestep:8}.
    
    \textbf{Step \ref{restorestep:10}.}
    First we prove that every node $u$ that is still faulty at the beginning of this step and every node $v\in X(u)$ are alternative nodes. By definition of alternative nodes, we need to prove that $dist(u,v)\leq f$ and that $v$ and $u$ have at most $f-2$ nodes $w\in B_{f+1}(v)\cup B_{f+1}(u)$ with $dist(u,w)\ne dist(v,w)$.
    
    First, since $v\in X(u)$, it must be that $dist(u,v)\leq f$. 
    Second, we prove that $v$ and $u$ have at most $f-2$ nodes $t\in B_{f+1}(v)\cup B_{f+1}(u)$ with $dist(u,t)\ne dist(v,t)$. Assume, towards a contradiction, that there are at least $f-1$ nodes $t\in B_{f+1}(v)\cup B_{f+1}(u)$ with $dist(v,t)\ne dist(u,t)$. Since $u$ is still faulty in Step \ref{restorestep:9}, there are at most $f-2$ nodes $q\ne u,v$ in $B_{f+1}(u)$ with $dist(u,q)\ne dist(v,q)$, because otherwise $u$ would receive a message with $ID(u)$ from such a non-faulty node during Step \ref{restorestep:7} and would set $b(u)=0$ in Step \ref{restorestep:8}. Therefore, there exists a node $s\in B_{f+1}(v)\setminus B_{f+1}(u)$ that is still faulty in Step \ref{restorestep:9}.
        We consider two cases, obtaining a contradiction in both, and conclude that the assumption is incorrect. Thus, $v$ and $u$ have at most $f-2$ nodes $t\in B_{f+1}(v)\cup B_{f+1}(u)$ with $dist(u,t)\ne dist(v,t)$.
        
        \textbf{Case 1: No node $r\ne v$ exists on $P_{v,s}\cap P_{v,u}$.}
        An illustration of this case is presented in Figure \ref{fig: section 4 theorem case 1}.
        \begin{figure}[htbp]
            \centering
            \includegraphics[width=0.4\linewidth]{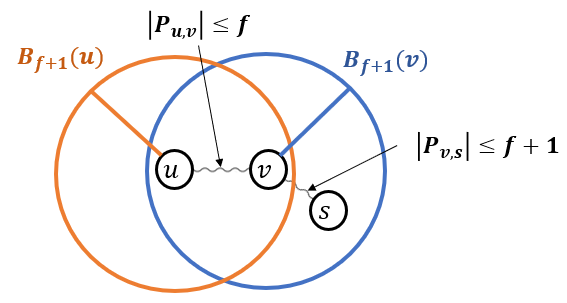}
            \caption{Illustration for Case 1. No node $r\ne v$ exists on $P_{v,s}\cap P_{v,u}$. Therefore, $P_{v,u}$ and $P_{v,s}$ are disjoint except at the node $v$. The orange circle represents $B_{f+1}(u)$ and the blue circle represents $B_{f+1}(v)$. We assume that $dist(u,v)\leq f$, so $|P_{u,v}|\leq f$. Additionally, $s\in B_{f+1}(v)\setminus B_{f+1}(u)$.}
            \label{fig: section 4 theorem case 1}
        \end{figure}
        In this case, every node $t\ne u,v$ on $P_{v,u}$ satisfies that $dist(u,t)\ne dist(v,t)$, except at most one node $a$ (such a node $a$ exists if $dist(u,v)$ is even and $a$ is exactly in the middle of $P_{v,u}$).
        Additionally, every node $t$ on $P_{v,s}$ satisfies that $dist(t,v)\ne dist(t,u)$, as otherwise $dist(u,s)\leq dist(u,t)+dist(t,s)=dist(v,t)+dist(t,s)=dist(v,s)\leq f+1$, which contradicts the assumption that $dist(u,s)\geq f+2$. 
        Denote by $P'$ the path $P_{u,v}\circ P_{v,s}$ at a distance of at most $f+1$ from $u$. There are $f$ nodes on $P'$ that are neither $u$ nor $v$. Additionally, excluding the node $a$, we have $f-1$ nodes $q$ on $P'$ with $dist(v,q)\ne dist(u,q)$, and they are all in $B_{f+1}(u)$. This contradicts the assumption that there are at most $f-2$ such nodes. Therefore, this case in not possible.
        
        \textbf{Case 2: There exists a node $r\ne u$ on $P_{v,s}\cap P_{v,u}$.} An illustration of this case is presented in Figure \ref{fig: section 4 theorem case 2}.
        \begin{figure}[htbp]
            \centering
            \includegraphics[width=0.5\linewidth]{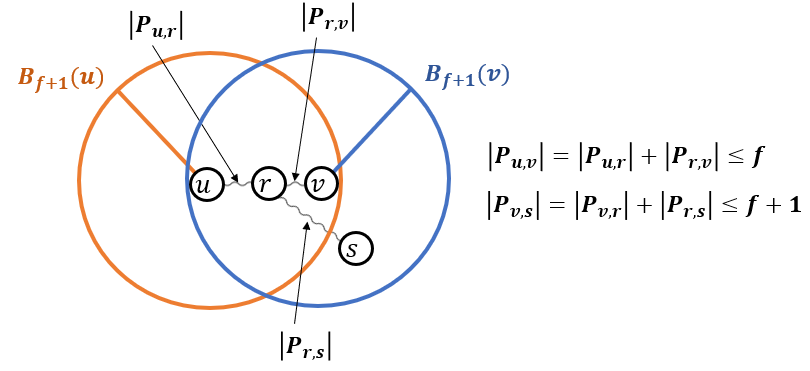}
            \caption{Illustration for Case 2. There exists a node $r\ne v$ on $P_{v,s}\cap P_{v,u}$. We consider $r$ to be the closest node to $u$ in $P_{v,u}\cap P_{v,s}$. The orange circle represents $B_{f+1}(u)$ and the blue circle represents $B_{f+1}(v)$. We assume that $dist(u,v)\leq f$, and since $r$ is on $P_{u,v}$, we have $|P_{u,r}|+|P_{r,v}|\leq f$. Additionally, $s\in B_{f+1}(v)\setminus B_{f+1}(u)$. Since $r$ is on $P_{v,s}$, we have $|P_{v,r}|+|P_{r,s}|\leq f+1$.}
            \label{fig: section 4 theorem case 2}
        \end{figure}
        We consider $r$ to be the closest node to $u$ in $P_{v,u}\cap P_{v,s}$.
        In this case, every node $t\ne u,v$ on $P_{v,u}$ satisfies that $dist(u,t)\ne dist(v,t)$, except at most one node $a$, as we proved in the previous case. Therefore, it is also correct for $P_{u,r}$. 
        Additionally, every node $t$ on $P_{r,s}$ satisfies that $dist(t,v)\ne dist(t,u)$, as otherwise $dist(u,s)\leq dist(u,t)+dist(t,s)=dist(v,t)+dist(t,s)=dist(v,s)\leq f+1$, contradicting the assumption that $dist(u,s)\geq f+2$.

        Denote by $P'$ the path $P_{u,r}\circ P_{r,s}$ at a distance of at most $f+1$ from $u$. There are $f+1$ nodes on $P'$ that are not $u$. Additionally, excluding the node $a$, we have $f$ nodes $q$ on $P'$ with $dist(v,q)\ne dist(u,q)$, and they are all in $B_{f+1}(u)$. This contradicts the assumption that there are at most $f-2$ such nodes. Therefore, this case in not possible.
    
    ~\\
    This completes the proof that every node $u$ that is still faulty at the beginning of this step and every node $v\in X(u)$ are alternative nodes.

    Now, we prove that if $u$ sets $b(u)=0$ in this step, then it satisfies that $u\notin S$. In this case, since $u$ sets $b(u)=0$, there exists a node $v\in X(u)$ with $ID(v)<ID(u)$. Since $v$ is an alternative node for $u$ with $ID(v)<ID(u)$, it follows that $u\notin S$ by Lemma \ref{lemma: greedy}.

    Finally, we prove that every faulty node $u$ that sets $b(u)=1$ satisfies that $u\in S$. By the definition of a $(2f+2,2f+1)$-ruling set, there exists a node $v\in S$ such that $dist(u,v)\leq 2f+1$ (which may be $u$ itself). We claim that $v\in B_{f}(u)$. Assume otherwise, i.e., that $dist(u,v)\geq f+1$.  
    Notice that $v,u$ have at least $|P_{v,u}|-2$ nodes $j$ with $dist(v,j)\ne dist(u,j)$ on $P_{v,u}$, since there are $|P_{v,u}|-1$ nodes $j\ne u,v$ on $P_{v,u}$, and at most one of them has $dist(v,j)=dist(u,j)$ (such a node $j$ exists if $dist(u,v)$ is even and $j$ is exactly in the middle of $P_{v,u}$). Therefore, if $dist(u,v)\geq f+1$, consider nodes on the path $P_{u,v}$ that are at a distance of at most $f+1$ from $u$. There are at least $f-1$ such nodes $j\ne u,v$ with $dist(u,j)\ne dist(v,j)$, and so at least one of them is not faulty, which would lead to $u$ setting $b(u)=0$ in Step \ref{restorestep:8}. But, $u$ is faulty at the beginning of Step \ref{restorestep:9}, which leads to a contradiction.
    
    Thus, $v\in B_{f}(u)$. Additionally, $v\in X(u)$ because in Step \ref{restorestep:9}, $u$ is still faulty, and therefore, in Step \ref{restorestep:3}, $u$ does not set $b(u)=0$, so $v$ must also still be faulty in Step \ref{restorestep:3}. Since $dist(u,v)\leq 2f+1$ and $v$ is still faulty in Step \ref{restorestep:4}, $u$ receives a message with $ID(v)$ in Step \ref{restorestep:4}. Note that if $v$ is not faulty in Step \ref{restorestep:9}, then either it is non-faulty to begin with, in which case $u$ would have already set $b(u)=0$ in Step \ref{restorestep:3}, or $v$ sets $b(v)=1$ in Step \ref{restorestep:5}. The latter is impossible because $dist(u,v)\leq 2f+1$ and $u$ is still faulty in Step \ref{restorestep:9}. Thus, $v$ is still faulty in Step \ref{restorestep:4}, which means that $v$ would receive a message from $u$ in Step \ref{restorestep:4} and not set $b(v)=1$ in Step \ref{restorestep:5}. Thus, $v$ is also still faulty in Step \ref{restorestep:9}, and since $dist(u,v)\leq f$, then $v\in X(u)$.
    
    Thus, since $u$ sets $b(u)=1$, every node $w\in X(u)$ satisfies $ID(u)\leq ID(w)$. Since, $w\in X(u)$, then as we proved, $w$ is an alternative node for $u$. By Lemma \ref{lemma: greedy}, this implies that $w$ cannot be in $S$. Since $v\in X(u)\cap S$, it must hold that $v=u$. Thus $u\in S$, as needed.
    
    This completes the proof that, at the end of Algorithm \ref{alg: restoring greedy ruling set}, we set $b(w)$ for every node $w\in V$ correctly, satisfying $\{w|b(w)=1\}=S$.
    Finally, we sum the number of rounds the algorithm takes. In every broadcast in the algorithm, we have at most $f$ different messages (each of which fits in $O(\log n)$ bits). Moreover, a node passes such a message only once. Thus, each broadcast takes $O(f)$ rounds. Since there are only a constant number of steps with such broadcasts, the algorithm completes in $O(f)$ rounds.
\end{proof}

\section{Partitioning} \label{section: partitioning}

The nodes use the restored ruling set to partition themselves into groups, with each group containing an error correction code to recover the erased labels. The algorithm they follow is the same as that of the oracle, ensuring determinism and independence from faulty nodes. This guarantees consistent partitioning and correct decoding.

To ensure fast communication within each group, our goal is to partition the graph nodes into groups of size $\Theta (f)$. Our algorithm achieves guarantees that are slightly weaker, in the sense that a group may be a disconnected component in the graph, but its weak diameter (its diameter in the original graph) is still $O(f)$. In general, such a guarantee may still be insufficient for fast communication within each group, as some edges may be used by multiple groups, causing congestion. However, our algorithm will make sure to provide low-congestion shortcuts. That is, 
it constructs groups of size $\Theta(f)$ with a weak diameter of $O(f)$, such that each edge $e\in E$ is associated with a constant number of groups that communicate over it. A formal definition of this concept is as follows.

\begin{definition}[Low-Congestion Shortcuts \cite{ghaffari2016distributed}]
    Given a graph $G=(V,E)$ and a partition of $V$ into disjoint subsets $Q_1,\dots,Q_K\subseteq V$, a set of subgraphs $H_1,\dots H_K\subseteq G$ is called a set of \emph{$(c,d)$-shortcuts}, if:
    \begin{enumerate}
        \item The subgraph $G[Q_i]\cup H_i$ is connected.
        \item For each $i$, the diameter of the subgraph $G[Q_i]\cup H_i$ is at most $d$.
        \item For each edge $e\in E$, the number of subgraphs $G[Q_i]\cup H_i$ containing $e$ is at most $c$.
    \end{enumerate}
\end{definition}

With the above definition, we can now state the properties of our partitioning algorithm.
\begin{theorem}\label{theorem:split to groups}
    Let $G=(V,E)$ be a graph, and let $S$ be a $(2f+2,2f+1)$-ruling set where each node $v$ knows whether it is in $S$. There exists a deterministic \congest algorithm that partitions $V$ into disjoint groups $Q_1,\dots Q_K\subseteq V$ with sizes ranging from $f+1$ to $3f+1$, and provides a set of $(2,4f)$-shortcuts for these groups. The algorithm completes in $\Theta (f)$ rounds.
\end{theorem}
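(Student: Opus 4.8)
The plan is to realize the construction sketched in the technical overview: first grow a BFS forest rooted at the ruling set $S$, and then chop each tree into groups of the right size, attaching shortcut edges to control both the weak diameter and the congestion. I would have every root $v\in S$ grow a tree $T_v$ simultaneously, with each node joining the tree of its closest root and ties broken in favour of the smaller root ID. This rule is deterministic and independent of which nodes are faulty, which is exactly what is needed for the partition to coincide with the one the oracle computes in Algorithm~\ref{alg: assigning labels}. Since $S$ is a $(2f+2,2f+1)$-ruling set, every node lies within distance $2f+1$ of some root, so the forest spans $V$ and each tree has depth at most $2f+1$; growing it costs $O(f)$ rounds. Moreover, by the separation $2f+2$ between roots, every node of $B_f(v)$ is \emph{strictly} closer to $v$ than to any other root and hence lands in $T_v$, so $|T_v|\ge f+1$ and every tree is already large enough to yield at least one valid group.

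Next I would split each tree $T_v$ into groups of size between $f+1$ and $3f+1$ with a single bottom-up pass. Each node first learns its subtree size by aggregating counts towards $v$ in $O(f)$ rounds. Then, processing from the leaves up, a node $u$ receives from its children connected \emph{pending} pieces, each of size at most $f$, greedily bundles them, and closes a group as soon as the accumulated size reaches $f+1$. Because the bundle is at most $f$ just before the last piece (itself at most $f$) is added, each closed group has size in $[f+1,2f]$; the leftover bundle augmented with $u$ is then passed up as the new pending piece, and we maintain the invariant that a pending piece has size at most $f$ by closing it as a (valid) group whenever it would reach $f+1$. At the root $v$ the final leftover cannot be passed further, so it is merged into an adjacent closed group; this is the only place the size grows, and it is bounded by $2f+f\le 3f+1$. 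A top-down sweep assigns every node of a group a common identifier in $O(f)$ rounds, so the whole algorithm runs in $\Theta(f)$ rounds, the lower bound being forced by the depth-$(2f+1)$ BFS.

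It remains to supply the shortcut subgraphs $H_i$ and verify the $(2,4f)$ guarantee. Every group lies inside one tree of depth at most $2f+1$, and the pieces composing a group hang off a single connector node $u$ (or already form a connected subtree); since each piece has size at most $f$, it has depth at most $f-1$ below $u$. Routing the shortcut through $u$ makes $G[Q_i]\cup H_i$ connected and bounds its weak diameter by about $2f$ for an internally formed group and by at most $4f$ for the root-level group that absorbs a leftover. For the congestion I would charge each tree edge to the groups whose connection paths traverse it: an edge $\{a,\mathrm{parent}(a)\}$ carries the pending piece of $a$'s subtree for exactly one group, and it can in addition serve as a connector for at most one disconnected group assembled at $\mathrm{parent}(a)$, giving at most two groups per edge.

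The hard part will be precisely this last point. Because the theorem deliberately permits \emph{disconnected} groups, the crux is to define the shortcuts so that the connectors used to re-link the scattered pieces never overload a single edge while still respecting the $4f$ diameter bound. Handling high-degree connector nodes, where several disconnected groups are closed at once, and checking that reusing a connector for one group and the pending-piece route for another never exceeds the factor $2$, is the technically delicate step. Everything else -- the size bounds $[f+1,3f+1]$, the determinism (via the closest-root and smallest-ID tie-breaking and the fixed bottom-up order), and the $\Theta(f)$ round count -- follows directly from the bottom-up counting argument.
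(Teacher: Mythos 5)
Your proposal matches the paper's construction essentially step for step: a BFS forest rooted at $S$ with closest-root/smallest-ID tie-breaking (each tree has depth at most $2f+1$ and at least $f+1$ nodes), a bottom-up greedy bundling of pending pieces of size at most $f$ into groups of size $f+1$ to roughly $2f$, the root's leftover merged into an existing group (yielding the $3f+1$ bound), tree edges as the shortcut sets, and pipelined $O(f)$ communication. The ``delicate'' congestion step you flag resolves exactly as you sketch -- and as in the paper -- since the children whose pending pieces are bundled at a node are partitioned among the groups closed there, so each parent-child edge serves exactly one group in the normal downward relay and at most one additional group when the root's leftover is rerouted to an existing group, giving congestion $2$ and weak diameter at most $4f$.
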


\textbf{Overview.} 
Algorithm \ref{alg: partitioning} constructs the required partition as follows. Initially, each node $v\in S$ computes a BFS tree to depth $2f+1$ by propagating its ID. Nodes that receive multiple root IDs choose the smallest and use that tree only. The rest of the algorithm operates within each tree independently.

Within each tree, each node $u$ divides the nodes in its subtree into groups of size $f+1\leq s\leq 3f+1$, with a remainder of up to $f$ nodes. This remaining number is sent to its parent $parent(u)$, who handles dividing the remainders of its children similarly. The $\ComputeGroups(u)$ procedure sorts $u$'s children by increasing IDs, appends itself, and then forms groups by summing the remaining values of its children. Once a group reaches the desired size, a new group starts from the next child. Each child receives its group ID from $u$.
When a node receives its group allocation, it may need to inform some of its children to ensure that remaining nodes from its subtree join this group as well. This is handled by the $\RelayGroups(u)$ procedure, in which node $u$ relays the group ID. During this process, $u$ also sets the local variable $H_u$ which is used for constructing low-congestion shortcuts for communication among the nodes of each group.
Finally, if there is a small remainder at the root $v$, the root appends these nodes to one of the groups created in $\ComputeGroups(v)$ or, if no such group was constructed by $v$, it appends them to a group with the closest node from this group to $v$. This is done by $v$ obtaining the required group ID and relaying it to the relevant remaining nodes.

Following is the formal algorithm and proof.

\begin{algorithm}[ht!]
\caption{Partitioning algorithm}\label{alg: partitioning}
\setcounter{AlgoLine}{0}
\KwIn{A graph G and a $(2f+2,2f+1)$-ruling set $S$, where each node $v$ is provided with $b(v)$ such that $S=\{v\mid b(v)=1\}$.}
\KwOut{Each node $v$ holds $Group(v)$ and $H_{group}(v)$ for every group $group$.}
\BlankLine
    \nonl\textbf{Constructing BFS trees:}\\
    Every node $v\in S$  starts a BFS to a distance of $2f+1$. Every node $u\notin S$ which receives BFS messages selects the one with the lowest ID and continues with it. Every node $u\notin S$ in a tree rooted at $v$, designates $parent(u)$ as one of the nodes from which it received $v$, and sets $leader(u)=v$. Every node $u$ sends a bit ``1" to $parent(u)$. Let $children(u)$ denote the set of nodes from which $u$ receives such a bit.\label{partstep:bfs}
\BlankLine
    \nonl\textbf{Creating groups:}\\
    Every node $u\in V$ with $children(u)=\emptyset$ sends $remain(u) = 1$ to $parent(u)$. Every node $u\in V$ sorts its children by increasing IDs and appends itself last to create a sequence $order(u)=(w_1,\dots,w_{r_u},u=w_{r_u+1})$, where $r_u=|children(u)|$.\label{partstep:leaf}

    Every node $u\in V$ that receives $remain(w)$ from every node $w\in children(u)$ locally computes $Groups(u)$, $Messages(u)$ and $remain(u)$ using the $\ComputeGroups(u)$ procedure. For each item $(child,group)\in Messages(u)$, $u$ sends $group$ to $w_{child}$. 
    Additionally, if $u\notin S$ it sends $remain(u)$ to $parent(u)$. \label{partstep:ComputeGroups}
\BlankLine
    \nonl\textbf{Relaying group IDs:}\\
    Every node $u\notin S$ that receives $group$ from its parent computes $Messages_r(u)$ using the \RelayGroups$(u,group)$ procedure and sets $Group(u)=group$. For each item $(child,group)\in Messages_r(u)$, $u$ sends $group$ to its child $w_{child}$.\label{partstep:RelayGroups}
\BlankLine
    \nonl\textbf{Root remainder:}\\
    Every node $u\notin S$ sends a group ID $group$ to $parent(u)$. If $Group(u)\ne \emptyset$ then $group$ is $Group(u)$. If $Group(u)=\emptyset$ then $group$ is the first group identifier that $u$ receives from some child (the smallest ID if there is more than one).\label{partstep:UpGroup}

    Every node $v\in S$ with $Group(v)=\emptyset$ sets $Group(v)=group$, where $group$ is the first group identifier that $v$ receives from some child (the smallest ID if there is more than one). Then, $v$ sends $group$ to every child $w$ with $remain(w)>0$ and $Group(w)=\emptyset$ and to the node it received $group$ from in Step \ref{partstep:UpGroup}. For every such node $w$, the node $u$ inserts the edge $(u,w)$ into the set $H_{group}(u)$.\label{partstep v choose}

    Every node $u\notin S$ with $Group(u)=\emptyset$ which receives $group$ from $parent(u)$ sets $Group(u)=group$ and sends $group$ to every child $w$ with $remain(w)>0$ and $Group(w)=\emptyset$ and to the node it received $group$ from in Step \ref{partstep:UpGroup} (if there exists such node). For every such node $w$, the node $u$ inserts the  edge $(u,w)$ into $H_{group}(u)$.\label{partstep u down}
    \BlankLine
    \nonl\textbf{Assigning shortcut edges:}\\
    Denote in $Q_1,\dots,Q_K$ as the groups created by all nodes and define $H_i=\cup_{u\in V}H_{ID(Q_i)}(u)$.
\end{algorithm}

~\\\textbf{Procedure $\ComputeGroups(u)$.}
The node $u\in V$ initializes a counter $count=0$, an empty sequence $Groups(u)=\emptyset$, an empty set $Messages(u)=\emptyset$, and a variable $remain(u)=1$. 
It then loops over $1\leq i\leq r_u+1$ and computes $count \leftarrow count + remain(w_i)$ until $count\geq f+1$.

When this happens for some index $i=j_1$, the node $u$ adds the item $(1, j_1,ID(w_{1}))$ to the sequence $Groups(u)$, sets $count \leftarrow 0$, and continues looping over the nodes in $order(u)$ from $i=j_1+1$ in the same manner. That is, it computes $count \leftarrow count + remain(w_i)$ until $count\geq f+1$ for some value $j_2$, after which it adds the item $(j_1, j_2,ID(w_{j_1}))$ to $Groups(u)$, and so on. Thus, every item in $Groups(u)$ is of the form $(start, end,group)$, where $group$ is the ID of the node $w_{start}$, which serves as the group identifier for all nodes $w_i$ for $start\leq i\leq end$.

Let $(a,b,ID(w_{a}))$ be the last group created by $u$, so $w_b$ is the last child of $u$ whose $remain(w_b)$ value is grouped. 
The node $u$ computes $remain(u)=(\sum_{i=b+1}^{r_u}remain(w_i))+1$.
If $u\in S$ and $Groups(u)\ne \emptyset$, $u$ updates $(a,b,ID(w_a))$ to $(a,r_u+1,ID(w_a))$.

For each item $(start, end, group)$ in $Groups(u)$, $u$ inserts a tuple of $(group,i)$ into $Messages(u)$ for every child $w_i$ where $start \leq i\leq end$ and $remain(w_i)>0$. Thus, every item is of the form: $(child,group)$.
For each item $(child,group)\in Messages(u)$, the node $u$ inserts the edge $(u,w_{child})$ into the set $H_{group}(u)$.

~\\\textbf{Procedure $\RelayGroups(u,group)$.}
Let $(a,b,ID(w_{a}))$ be the last group that $u$ created during its $\ComputeGroups(u)$ procedure. The node $u$ initializes an empty set $Messages_r(u)=\emptyset$. Then, $u$ inserts a tuple $(i,group)$ into $Messages_r(u)$ for every child $w_i$ where $b+1\leq i\leq r_u$. Thus, every item is of the form $(child,group)$.
As in $\ComputeGroups(u)$, for each item $(child,group)\in Messages_r(u)$, the node $u$ inserts the edge $(u,w_{child})$ into the set $H_{group}(u)$.

~\\This concludes the description of the algorithm. 
Appendix \ref{app:proofs} has the proof of Theorem \ref{theorem:split to groups}, as well as for the following claim derived from it, which will be useful in the subsequent section.

\begin{claim}\label{claim: parallel passing info between groups}
In parallel, each node can collect $Y\cdot f$ bits of information from all nodes in its group within $O(f+(Y\cdot f)/\log n)$ rounds, where each node holds $Y$ bits of information and the group sizes are $O(f)$.
\end{claim}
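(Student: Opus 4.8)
The plan is to derive the claim directly from the $(2,4f)$-shortcut structure guaranteed by Theorem~\ref{theorem:split to groups}. That theorem partitions $V$ into groups $Q_1,\dots,Q_K$ of size $O(f)$ and provides subgraphs $H_1,\dots,H_K$ such that each $G[Q_i]\cup H_i$ is connected, has diameter at most $4f=O(f)$, and every edge $e\in E$ lies in at most two of these subgraphs. Since each group has $O(f)$ nodes and each node holds $Y$ bits, the total amount of information to be gathered within a group is $O(Y\cdot f)$ bits, i.e., $O((Y\cdot f)/\log n)$ messages of $O(\log n)$ bits each, and the goal is to make every node learn all of it.

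First I would, within each group $Q_i$ in parallel, build a BFS tree $T_i$ over the connected subgraph $G[Q_i]\cup H_i$ rooted at its minimum-ID node, which is identified by a flood through $G[Q_i]\cup H_i$ in which each node adopts as its parent the neighbor from which the eventual minimum first arrived. Because the subgraph has diameter $O(f)$, this flood stabilizes in $O(f)$ rounds and the eccentricity of the root is $O(f)$, so $T_i$ has depth $O(f)$; its tree edges include the shortcut edges recorded in the $H_{group}$ sets of Algorithm~\ref{alg: partitioning}. Next I would gather all information at the root via a pipelined upcast along $T_i$, where each node forwards, one $O(\log n)$-bit message per round along its parent edge, the payloads it has not yet sent. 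By the standard pipelined upcast bound, delivering $m$ messages to the root of a depth-$D$ tree takes $O(D+m)$ rounds, which here is $O(f+(Y\cdot f)/\log n)$. The root then performs the reverse operation, a pipelined downcast of all $O(Y\cdot f)$ collected bits back down $T_i$, again in $O(f+(Y\cdot f)/\log n)$ rounds, after which every node in $Q_i$ knows the entire group's information.

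The main point to verify is that running all $K$ groups simultaneously does not blow up the round complexity, and this is exactly where the congestion guarantee enters. Since each edge $e$ belongs to at most two of the subgraphs $G[Q_i]\cup H_i$, at most two distinct groups ever route over $e$, so in each round $e$ needs to carry at most two $O(\log n)$-bit messages; interleaving them incurs only a constant-factor slowdown and stays within the \congest bandwidth of $O(\log n)$ bits per edge per round. Hence the parallel flooding, upcast, and downcast across all groups still complete in $O(f+(Y\cdot f)/\log n)$ rounds, which establishes the claim.

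I expect the bookkeeping of this parallel scheduling to be the only delicate part: one must match the two-way per-edge congestion of the shortcuts to the pipelined upcast and downcast so that no edge is ever asked to transmit more than a constant number of messages in a single round, and confirm that the constant slowdown does not interfere with the pipelining bound. The tree construction over the (possibly $G$-disconnected) group via its shortcut edges and the single-group $O(D+m)$ pipelining bounds are routine once the shortcut structure of Theorem~\ref{theorem:split to groups} is in hand.
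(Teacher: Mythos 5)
Your proposal is correct and follows essentially the same route as the paper: invoke the $(2,4f)$-shortcut guarantee of Theorem~\ref{theorem:split to groups}, observe that group size $O(f)$ and weak diameter $O(f)$ reduce the task to pipelining $O((Y\cdot f)/\log n)$ messages over a depth-$O(f)$ tree, and use the congestion bound of $2$ per edge to argue that running all groups in parallel costs only a constant-factor slowdown. The only cosmetic difference is that you rebuild a BFS tree inside each $G[Q_i]\cup H_i$ and spell out the upcast/downcast, whereas the paper reuses the tree structure already recorded in the $H_{group}(u)$ sets and appeals to ``standard pipelining''; both are fine.
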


\section{Resilient Labeling Schemes} \label{section: f-resilient labeling schemes}

In this section, we wrap up the previous parts to obtain our resilient labeling scheme.

\ThmMain*

\textbf{Overview.} Algorithm \ref{alg:resilientscheme} gives our full resilient labeling scheme. The oracle executes Algorithm \ref{alg: assigning labels} to assign new labels to each node. In the distributed algorithm, each node parses its new label, which is done correctly since the length of each item in the string is known. The node then uses this label to reconstruct the greedy $(2f+2,2f+1)$-ruling set $S$ (with $f=80\F$) using Algorithm \ref{alg: restoring greedy ruling set} and to obtain its group in the partition according to Algorithm \ref{alg: partitioning}. Finally, the nodes exchange all their labels within a group and decode their original labels, as guaranteed by Theorem \ref{theorem:code analysis}.

\begin{algorithm}[htbp]
\caption{Resilient Labeling Scheme}\label{alg:resilientscheme}
\setcounter{AlgoLine}{0}
\nonl\textbf{Input to Oracle:} A graph $G$, a parameter $\F$, and a labeling $\varphi:V\rightarrow \{0,1\}^\ell$\\
\nonl\textbf{Output of Oracle:} A labeling $\psi:V\rightarrow \{0,1\}^{\ell'}$\\
\nonl\textbf{Input to Distributed Algorithm:} Each node $v$ gets the label $\psi(v)$\\
\nonl\textbf{Output of Distributed Algorithm:} Each node $v$ outputs $\varphi(v)$\\
    \BlankLine
    \nonl\textbf{Oracle:}\\
    Run Algorithm \ref{alg: assigning labels} and assign $\psi(v)$ to each node $v$ \nonl\label{schemestep:oracle}.
    \BlankLine
    \textbf{The distributed algorithm:}\\ 
    For every non-faulty node $v$, parse $\psi(v)$ as $b(v)\circ distS(v) \circ {\tt{w}}(v)$\label{schemestep:parse}.
    
    Run Algorithm \ref{alg: restoring greedy ruling set} with $\zeta(v)=b(v)\circ distS(v)$ for each non-faulty node $v$ to obtain $b(v)$ for every node $v$.\label{schemestep:restore}
    
    Run Algorithm \ref{alg: partitioning} with $b(v)$ for every node $v$, to obtain $Group(v)$ for every node $v$\label{schemestep:partition}.
    
    Every node $v$ sends $(v,{\tt{w}}(v))$ to all nodes in $Group(v)$\label{schemestep:collect code}.
    
    Every node $v$ outputs $decode(v,\{(u,{\tt{w}}(u)) \mid u \in Group(v)\})$\label{schemestep:decode}.
\end{algorithm}

\begin{proof}[Proof of Theorem \ref{thm:main}]

First, note that since $b(v)$ is a single bit and $distS(v)$ has $O(\log{\F})$ bits where the constant is known, the parsing of $\psi(v)$ as $b(v)\circ distS(v) \circ {\tt{w}}(v)$ in Step \ref{schemestep:parse} is unique and matches Step \ref{oraclestep:encode} in Algorithm \ref{alg: assigning labels}.

\textbf{Correctness:} 
By Theorem \ref{theorem:code analysis}, the values $b(v)$ correspond to a greedy $(2f+2,2f+1)$-ruling set $S$, meaning that $\{v \mid b(v) = 1\} =S$. Thus, according to Theorem \ref{theorem:black or white}, after running Algorithm \ref{alg: restoring greedy ruling set} in Step \ref{schemestep:restore}, each node $v$ holds its value $b(v)$. 
Since Algorithm \ref{alg: partitioning} is deterministic and the oracle uses the same algorithm in Algorithm \ref{alg: assigning labels} in Step \ref{schemestep:oracle}, after running Algorithm \ref{alg: partitioning} in Step \ref{schemestep:partition}, each node $v$ will have the same $Group(v)$ as determined by the oracle in Step \ref{oraclestep:partition} of Algorithm \ref{alg: assigning labels}.
Therefore, by Theorem \ref{theorem:code analysis}, the operation $decode(v,\{(u,{\tt{w}}(u)) \mid u \in Group(v)\})$ in Step \ref{schemestep:decode} will yield the value of $\varphi(v)$ for each node $v$.
        
\textbf{Round complexity:} By Theorems \ref{theorem:black or white} and \ref{theorem:split to groups}, Algorithms \ref{alg: restoring greedy ruling set} and \ref{alg: partitioning} in Steps \ref{schemestep:restore} and \ref{schemestep:partition}, respectively, each take $O(f)$ rounds. Additionally, according to Claim \ref{claim: parallel passing info between groups}, every node can collect $O(\ell\cdot f)$ bits of information from all nodes in its group in parallel, within $O(f+(\ell\cdot f)/\log n)$ rounds, as each node holds $O(\ell)$ bits of information and each group is of size $O(f)$. Thus, Step \ref{schemestep:collect code} completes in $O(f+(\ell\cdot f)/\log n)$ rounds.
In total, the distributed algorithm completes in $O(f+(\ell\cdot f)/\log n)$ rounds. Since $f=\Theta(\F)$, we have $O(\F+(\ell\cdot \F)/\log n)$ rounds.

\textbf{Label size:} According to Theorem \ref{theorem:code analysis}, each of the original $\ell$ bits corresponds to a constant number of bits in the coded version. Additionally, there is one bit for the ruling set indication $b(v)$ and $O(\log{\F})$ bits for $distS(v)$, making the new label size $O(\log(\F)+\ell)$. 
\end{proof}

\acknowledgements{This research is supported in part by the Israel Science Foundation
(grant 529/23). We thank Alkida Balliu and Dennis Olivetti for useful discussions about the state of the art for computing ruling sets.}

\bibliography{Bibliography.bib}

\begin{thebibliography}{10}

\bibitem{abraham2012fully}
Ittai Abraham, Shiri Chechik, and Cyril Gavoille.
\newblock Fully dynamic approximate distance oracles for planar graphs via
  forbidden-set distance labels.
\newblock In {\em Proceedings of the forty-fourth annual ACM symposium on
  Theory of computing}, pages 1199--1218, 2012.

\bibitem{alstrup2002nearest}
Stephen Alstrup, Cyril Gavoille, Haim Kaplan, and Theis Rauhe.
\newblock Nearest common ancestors: a survey and a new distributed algorithm.
\newblock In {\em Proceedings of the fourteenth annual ACM symposium on
  Parallel algorithms and architectures}, pages 258--264, 2002.

\bibitem{AshkenaziGL22}
Yagel Ashkenazi, Ran Gelles, and Amir Leshem.
\newblock Noisy beeping networks.
\newblock {\em Inf. Comput.}, 289(Part):104925, 2022.
\newblock URL: \url{https://doi.org/10.1016/j.ic.2022.104925}, \href
  {https://doi.org/10.1016/J.IC.2022.104925}
  {\path{doi:10.1016/J.IC.2022.104925}}.

\bibitem{DBLP:conf/stoc/Balliu0KO22}
Alkida Balliu, Sebastian Brandt, Fabian Kuhn, and Dennis Olivetti.
\newblock Distributed \protect{$\Delta$}-coloring plays hide-and-seek.
\newblock In Stefano Leonardi and Anupam Gupta, editors, {\em {STOC} '22: 54th
  Annual {ACM} {SIGACT} Symposium on Theory of Computing, Rome, Italy, June 20
  - 24, 2022}, pages 464--477. {ACM}, 2022.
\newblock \href {https://doi.org/10.1145/3519935.3520027}
  {\path{doi:10.1145/3519935.3520027}}.

\bibitem{bar2022fault}
Aviv Bar-Natan, Panagiotis Charalampopoulos, Pawe{\l} Gawrychowski, Shay Mozes,
  and Oren Weimann.
\newblock Fault-tolerant distance labeling for planar graphs.
\newblock {\em Theoretical Computer Science}, 918:48--59, 2022.

\bibitem{baruch2015randomized}
Mor Baruch, Pierre Fraigniaud, and Boaz Patt-Shamir.
\newblock Randomized proof-labeling schemes.
\newblock In {\em Proceedings of the 2015 ACM Symposium on Principles of
  Distributed Computing}, pages 315--324, 2015.

\bibitem{bick2022distributed}
Aviv Bick, Gillat Kol, and Rotem Oshman.
\newblock Distributed zero-knowledge proofs over networks.
\newblock In {\em Proceedings of the 2022 Annual ACM-SIAM Symposium on Discrete
  Algorithms (SODA)}, pages 2426--2458. SIAM, 2022.

\bibitem{Censor-HillelPP20}
Keren Censor{-}Hillel, Ami Paz, and Mor Perry.
\newblock Approximate proof-labeling schemes.
\newblock {\em Theor. Comput. Sci.}, 811:112--124, 2020.
\newblock URL: \url{https://doi.org/10.1016/j.tcs.2018.08.020}, \href
  {https://doi.org/10.1016/J.TCS.2018.08.020}
  {\path{doi:10.1016/J.TCS.2018.08.020}}.

\bibitem{dory2021fault}
Michal Dory and Merav Parter.
\newblock Fault-tolerant labeling and compact routing schemes.
\newblock In {\em Proceedings of the 2021 ACM Symposium on Principles of
  Distributed Computing}, pages 445--455, 2021.

\bibitem{EmekG20}
Yuval Emek and Yuval Gil.
\newblock Twenty-two new approximate proof labeling schemes.
\newblock In Hagit Attiya, editor, {\em 34th International Symposium on
  Distributed Computing, {DISC} 2020, October 12-16, 2020, Virtual Conference},
  volume 179 of {\em LIPIcs}, pages 20:1--20:14. Schloss Dagstuhl -
  Leibniz-Zentrum f{\"{u}}r Informatik, 2020.
\newblock URL: \url{https://doi.org/10.4230/LIPIcs.DISC.2020.20}, \href
  {https://doi.org/10.4230/LIPICS.DISC.2020.20}
  {\path{doi:10.4230/LIPICS.DISC.2020.20}}.

\bibitem{emek2022locally}
Yuval Emek, Yuval Gil, and Shay Kutten.
\newblock Locally restricted proof labeling schemes.
\newblock In {\em 36th International Symposium on Distributed Computing (DISC
  2022)}. Schloss-Dagstuhl-Leibniz Zentrum f{\"u}r Informatik, 2022.

\bibitem{faour2023local}
Salwa Faour, Mohsen Ghaffari, Christoph Grunau, Fabian Kuhn, and V{\'a}clav
  Rozho{\v{n}}.
\newblock Local distributed rounding: Generalized to mis, matching, set cover,
  and beyond.
\newblock In {\em Proceedings of the 2023 Annual ACM-SIAM Symposium on Discrete
  Algorithms (SODA)}, pages 4409--4447. SIAM, 2023.

\bibitem{feuilloley2021introduction}
Laurent Feuilloley.
\newblock Introduction to local certification.
\newblock {\em Discrete Mathematics \& Theoretical Computer Science},
  23(Distributed Computing and Networking), 2021.

\bibitem{FeuilloleyF22}
Laurent Feuilloley and Pierre Fraigniaud.
\newblock Error-sensitive proof-labeling schemes.
\newblock {\em J. Parallel Distributed Comput.}, 166:149--165, 2022.
\newblock URL: \url{https://doi.org/10.1016/j.jpdc.2022.04.015}, \href
  {https://doi.org/10.1016/J.JPDC.2022.04.015}
  {\path{doi:10.1016/J.JPDC.2022.04.015}}.

\bibitem{feuilloley2021redundancy}
Laurent Feuilloley, Pierre Fraigniaud, Juho Hirvonen, Ami Paz, and Mor Perry.
\newblock Redundancy in distributed proofs.
\newblock {\em Distributed Computing}, 34:113--132, 2021.

\bibitem{fischer2022explicit}
Orr Fischer, Rotem Oshman, and Dana Shamir.
\newblock Explicit space-time tradeoffs for proof labeling schemes in graphs
  with small separators.
\newblock In {\em 25th International Conference on Principles of Distributed
  Systems (OPODIS 2021)}. Schloss Dagstuhl-Leibniz-Zentrum f{\"u}r Informatik,
  2022.

\bibitem{fraigniaud2009distributed}
Pierre Fraigniaud, Cyril Gavoille, David Ilcinkas, and Andrzej Pelc.
\newblock Distributed computing with advice: information sensitivity of graph
  coloring.
\newblock {\em Distributed Computing}, 21:395--403, 2009.

\bibitem{fraigniaud2010communication}
Pierre Fraigniaud, David Ilcinkas, and Andrzej Pelc.
\newblock Communication algorithms with advice.
\newblock {\em Journal of Computer and System Sciences}, 76(3-4):222--232,
  2010.

\bibitem{fraigniaud2007local}
Pierre Fraigniaud, Amos Korman, and Emmanuelle Lebhar.
\newblock Local mst computation with short advice.
\newblock In {\em Proceedings of the nineteenth annual ACM symposium on
  Parallel algorithms and architectures}, pages 154--160, 2007.

\bibitem{fusco2008trade}
Emanuele~G Fusco and Andrzej Pelc.
\newblock Trade-offs between the size of advice and broadcasting time in trees.
\newblock In {\em Proceedings of the twentieth annual symposium on Parallelism
  in algorithms and architectures}, pages 77--84, 2008.

\bibitem{gavoille2003compact}
Cyril Gavoille and David Peleg.
\newblock Compact and localized distributed data structures.
\newblock {\em Distributed Computing}, 16:111--120, 2003.

\bibitem{gavoille2004distance}
Cyril Gavoille, David Peleg, St{\'e}phane P{\'e}rennes, and Ran Raz.
\newblock Distance labeling in graphs.
\newblock {\em Journal of algorithms}, 53(1):85--112, 2004.

\bibitem{gawrychowski2018labeling}
Pawe{\l} Gawrychowski, Fabian Kuhn, Jakub {\L}opusza{\'n}ski, Konstantinos
  Panagiotou, and Pascal Su.
\newblock Labeling schemes for nearest common ancestors through minor-universal
  trees.
\newblock In {\em Proceedings of the Twenty-Ninth Annual ACM-SIAM Symposium on
  Discrete Algorithms}, pages 2604--2619. SIAM, 2018.

\bibitem{ghaffari2016distributed}
Mohsen Ghaffari and Bernhard Haeupler.
\newblock Distributed algorithms for planar networks ii: Low-congestion
  shortcuts, mst, and min-cut.
\newblock In {\em Proceedings of the twenty-seventh annual ACM-SIAM symposium
  on Discrete algorithms}, pages 202--219. SIAM, 2016.

\bibitem{goos2011locally}
Mika G{\"o}{\"o}s and Jukka Suomela.
\newblock Locally checkable proofs.
\newblock In {\em Proceedings of the 30th annual ACM SIGACT-SIGOPS symposium on
  Principles of distributed computing}, pages 159--168, 2011.

\bibitem{goos2016locally}
Mika G{\"o}{\"o}s and Jukka Suomela.
\newblock Locally checkable proofs in distributed computing.
\newblock {\em Theory of Computing}, 12(1):1--33, 2016.

\bibitem{izsak2012note}
Rani Izsak and Zeev Nutov.
\newblock A note on labeling schemes for graph connectivity.
\newblock {\em Information processing letters}, 112(1-2):39--43, 2012.

\bibitem{izumi2023deterministic}
Taisuke Izumi, Yuval Emek, Tadashi Wadayama, and Toshimitsu Masuzawa.
\newblock Deterministic fault-tolerant connectivity labeling scheme.
\newblock In {\em Proceedings of the 2023 ACM Symposium on Principles of
  Distributed Computing}, pages 190--199, 2023.

\bibitem{justesen1972class}
J{\o}rn Justesen.
\newblock Class of constructive asymptotically good algebraic codes.
\newblock {\em IEEE Transactions on information theory}, 18(5):652--656, 1972.

\bibitem{katz2004labeling}
Michal Katz, Nir~A Katz, Amos Korman, and David Peleg.
\newblock Labeling schemes for flow and connectivity.
\newblock {\em SIAM Journal on Computing}, 34(1):23--40, 2004.

\bibitem{kol2018interactive}
Gillat Kol, Rotem Oshman, and Raghuvansh~R Saxena.
\newblock Interactive distributed proofs.
\newblock In {\em Proceedings of the 2018 ACM Symposium on Principles of
  Distributed Computing}, pages 255--264, 2018.

\bibitem{korman2010labeling}
Amos Korman.
\newblock Labeling schemes for vertex connectivity.
\newblock {\em ACM Transactions on Algorithms (TALG)}, 6(2):1--10, 2010.

\bibitem{korman2006distributedA}
Amos Korman and Shay Kutten.
\newblock Distributed verification of minimum spanning trees.
\newblock In {\em Proceedings of the twenty-fifth annual ACM symposium on
  Principles of distributed computing}, pages 26--34, 2006.

\bibitem{korman2006distributed}
Amos Korman and Shay Kutten.
\newblock On distributed verification.
\newblock In {\em Distributed Computing and Networking: 8th International
  Conference, ICDCN 2006, Guwahati, India, December 27-30, 2006. Proceedings
  8}, pages 100--114. Springer, 2006.

\bibitem{KormanKP10}
Amos Korman, Shay Kutten, and David Peleg.
\newblock Proof labeling schemes.
\newblock {\em Distributed Comput.}, 22(4):215--233, 2010.
\newblock URL: \url{https://doi.org/10.1007/s00446-010-0095-3}, \href
  {https://doi.org/10.1007/S00446-010-0095-3}
  {\path{doi:10.1007/S00446-010-0095-3}}.

\bibitem{linial1992locality}
Nathan Linial.
\newblock Locality in distributed graph algorithms.
\newblock {\em SIAM Journal on computing}, 21(1):193--201, 1992.

\bibitem{DBLP:conf/sirocco/OstrovskyPR17}
Rafail Ostrovsky, Mor Perry, and Will Rosenbaum.
\newblock Space-time tradeoffs for distributed verification.
\newblock In Shantanu Das and S{\'{e}}bastien Tixeuil, editors, {\em Structural
  Information and Communication Complexity - 24th International Colloquium,
  {SIROCCO} 2017, Porquerolles, France, June 19-22, 2017, Revised Selected
  Papers}, volume 10641 of {\em Lecture Notes in Computer Science}, pages
  53--70. Springer, 2017.
\newblock \href {https://doi.org/10.1007/978-3-319-72050-0\_4}
  {\path{doi:10.1007/978-3-319-72050-0\_4}}.

\bibitem{ParterP22a}
Merav Parter and Asaf Petruschka.
\newblock {\~{O}}ptimal dual vertex failure connectivity labels.
\newblock In Christian Scheideler, editor, {\em 36th International Symposium on
  Distributed Computing, {DISC} 2022, October 25-27, 2022, Augusta, Georgia,
  {USA}}, volume 246 of {\em LIPIcs}, pages 32:1--32:19. Schloss Dagstuhl -
  Leibniz-Zentrum f{\"{u}}r Informatik, 2022.
\newblock URL: \url{https://doi.org/10.4230/LIPIcs.DISC.2022.32}, \href
  {https://doi.org/10.4230/LIPICS.DISC.2022.32}
  {\path{doi:10.4230/LIPICS.DISC.2022.32}}.

\bibitem{parter2023connectivity}
Merav Parter, Asaf Petruschka, and Seth Pettie.
\newblock Connectivity labeling for multiple vertex failures.
\newblock {\em arXiv preprint arXiv:2307.06276}, 2023.

\bibitem{patt2017proof}
Boaz Patt-Shamir and Mor Perry.
\newblock Proof-labeling schemes: Broadcast, unicast and in between.
\newblock In {\em International Symposium on Stabilization, Safety, and
  Security of Distributed Systems}, pages 1--17. Springer, 2017.

\bibitem{peleg2000proximity}
David Peleg.
\newblock Proximity-preserving labeling schemes.
\newblock {\em Journal of Graph Theory}, 33(3):167--176, 2000.

\end{thebibliography}

\appendix
\section{Missing Proofs}
\label{app:proofs}

\subsection{Proof of Theorem \ref{theorem:code analysis}}
In order to prove this theorem, we need the following claim on $N_M$.
\begin{claim}
\label{claim:N_M}
For every $M>1$, it holds that $N_{M}\leq 5N_{M-1}$.
\end{claim}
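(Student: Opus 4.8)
The plan is to replace $N_M$ by the closed form supplied by Lemma~\ref{lemma: coding} and reduce the claim to an elementary inequality in the single variable $M$. We have $N_M = 2M(2M-1)$ and $N_{M-1} = 2(M-1)\bigl(2(M-1)-1\bigr) = 2(M-1)(2M-3)$, so after dividing by $2$ the desired bound $N_M \le 5N_{M-1}$ is equivalent to $M(2M-1) \le 5(M-1)(2M-3)$. The whole statement is therefore a routine algebraic fact once this substitution is made.

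First I would expand both sides: the left-hand side is $2M^2 - M$, and the right-hand side is $5(2M^2 - 5M + 3) = 10M^2 - 25M + 15$. Collecting terms, the claim becomes
\[
8M^2 - 24M + 15 \ge 0 .
\]
I would verify this by locating the roots of the quadratic, namely $M = \tfrac{6 \pm \sqrt{6}}{4}$; since the parabola opens upward, the left-hand side is nonnegative precisely for $M \le \tfrac{6-\sqrt6}{4}$ or $M \ge \tfrac{6+\sqrt6}{4}\approx 2.11$, and in particular for every integer $M \ge 3$. An alternative and perhaps cleaner route is to show directly that the ratio $N_M/N_{M-1} = \tfrac{M(2M-1)}{(M-1)(2M-3)}$ is monotonically decreasing in $M$, so that it suffices to bound it at the smallest admissible value of $M$.

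I do not expect a serious obstacle: the only point requiring care is the behaviour near the crossover of the quadratic, i.e.\ the small values of $M$, which are finitely many and can be inspected by direct substitution. This also ties the claim to its intended use: in Algorithm~\ref{alg: assigning labels} the index $M$ is chosen as the smallest value with $k_M \ge s$, and since $s \ge f+1 = 80\F+1$ is large, the relevant $M$ lies safely inside the region where the bound holds. Because Theorem~\ref{theorem:code analysis} only needs $N_M$ to be within a constant factor of $N_{M-1}$ (ultimately absorbed into the generous constant $80$), establishing this constant-factor growth estimate is exactly what is required downstream.
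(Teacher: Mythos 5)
Your proof takes a genuinely different route from the paper's, and in the process it exposes a real problem with the claim --- but it also rests on a formula that the paper's own proof does not use. Lemma \ref{lemma: coding} as printed gives $N_M = 2M(2M-1)$, which is what you substitute; however, the paper's proof of Claim \ref{claim:N_M} (and Justesen's construction) works with $N_M = 2M(2^M-1)$ --- the ``$2M-1$'' in the lemma statement is evidently a typo for ``$2^M-1$''. So your quadratic $8M^2-24M+15\ge 0$ bounds the ratio of the wrong quantity for $M\ge 3$ (the two formulas happen to agree only at $M=1,2$). The paper instead rewrites $N_M$ as $4N_{M-1}+6(M-1)-2$ by direct algebraic manipulation of the exponential form and then absorbs the additive term into one more copy of $N_{M-1}$. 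If you redo your computation with the exponential form, the argument is if anything easier: $N_M/N_{M-1}=\frac{M}{M-1}\cdot\frac{2^M-1}{2^{M-1}-1}$ is a product of two decreasing factors, equal to $6$ at $M=2$ and to $3.5$ at $M=3$.

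That value $6$ at $M=2$ is the important point: as you correctly observe, the inequality fails at $M=2$ (under either formula, $N_2=12>10=5N_1$), so the claim as stated ``for every $M>1$'' is false, and your restriction to $M\ge 3$ is not a cosmetic caveat but a necessary one. The paper's proof hides this: its final step asserts that $6(M-1)-2<N_{M-1}$ ``since $M>1$'', which is false at $M=2$, where $6(M-1)-2=4$ while $N_1=2$. Your closing remark --- that in Theorem \ref{theorem:code analysis} the index $M$ is the smallest with $k_M\ge s\ge 80\F+1$, which forces $M$ well above $2$, so only the $M\ge 3$ regime is ever invoked --- is exactly the repair the statement needs. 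In summary: a different decomposition (closed-form quadratic versus the paper's telescoping), a definitional mismatch you should fix by working with $N_M=2M(2^M-1)$, and a correctly diagnosed boundary failure at $M=2$ that the paper's own argument overlooks.
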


\begin{proof}[Proof of Claim \ref{claim:N_M}]
For every $M>0$, the value of $N_M$ is defined as $2M(2^M-1)$, and so we have for $M>1$:
\begin{align*}
N_{M} &=
2M(2^{M}-1)\\
&=2(M-1)\cdot 2^{M}+2\cdot 2^{M}-2(M-1)-2\\
&=2(2(M-1)\cdot 2^{M-1})-2(2(M-1))+2\cdot 2^{M}+2(M-1)-2\\ 
&\leq 2N_{M-1} + 2\cdot 2(M-1)(2^{M-1}) - 2\cdot2(M-1) + 6(M-1) - 2\\
&= 4N_{M-1} + 6(M-1) - 2\\
&\leq 5N_{M-1},
\end{align*}
where the last inequality follows since $M>1$ and thus $6(M-1)-2<N_{M-1}$.
\end{proof}

We now prove the aforementioned properties of the code implied by our choice of parameters using a very crude analysis of the constants, which is sufficient for our needs.

\begin{proof}[Proof of Theorem \ref{theorem:code analysis}]
    Since $M$ is the smallest such that $k_M\geq s$, we have that $k_{M-1}<s$, or in other words, $\rho_{M-1}N_{M-1}<s$. We prove in Claim \ref{claim:N_M} above that $N_{M}$ is at most $5N_{M-1}$. Combining these two with the fact that $\rho_{M-1}\geq \rho=1/4$, gives that $$(1/4)N_{M-1}  \leq \rho_{M-1}N_{M-1} \leq s \leq k_M = \rho_{M} N_{M} \leq \rho_{M}\cdot 5 N_{M-1} \leq 5N_{M-1},$$ where the last inequality follows since $\rho_M \leq 1$. This gives that $k_M/s$ is at most $20$.  
    To encode, we take the $s$ bits and pad them with $k_M-s$ zeros concatenated at their end. We get $N_M$ bits that are split among the $s$ nodes, so now the number of bits that each node has is at most $N_M/s=k_M/(\rho_M\cdot s)$. We know that $\rho_M \geq 1/4$ and that $k_M/s$ is at most $20$, so each node holds at most $80$ bits. This proves Item (1).

    To prove Item (2), note that the code ${\tt{C}}_M$ has a relative distance $\delta_M > (1-2\rho)H^{-1}(1/2) \geq (1-2\cdot (1/4)) \cdot 0.11 \geq 1/20$.
    The code can fix up to $\delta_M N_{M} -1$ erasures, which is more than a $1/40$ fraction of $N_M$. Since $f=80\F$ and the number of blocks $s$ is at least $f+1$ (which is at least $f$), any erasure of $\F$ blocks erases at most a $1/80$ fraction of the blocks. Some blocks may be of size $\ceil{|{\tt{w}}|/s}$ and others of size $\floor{|{\tt{w}}|/s}$, which may differ by 1 and the larger blocks may get erased. However, even if we denote $x=\ceil{|{\tt{w}}|/s}$ and assume $\floor{|{\tt{w}}|/s}=x-1$, then in the worst case, the fraction of bit erasures caused by $\F$ block erasures is at most $(sx/80)/(sx-(79s/80))=x/(80x-79)$, which is at most $1/40$ (since $x>1$). Thus, this may erase at most a $1/40$ fraction of bits of ${\tt{w}}$. Because the code can fix up to a $1/40$ fraction of erasures, we can decode ${\tt{r}}$, obtaining Item (2) as needed. 
\end{proof}

\subsection{Proof of Lemma \ref{lemma: greedy}}
\begin{proof}[Proof of Lemma \ref{lemma: greedy}]
Let $v$ and $u$ be alternative nodes, and let $w\ne v,u$. We claim that $dist(v,w)\leq 2f+1$ if and only if $dist(u,w)\leq 2f+1$. This implies that if $v\in S$ then $(S\setminus\{v\})\cup\{u\}$ is a $(2f+2,2f+1)$-ruling set. In particular, this means that apart from $v$, the node $u$ does not have a node $w\in S$ within distance at most $2f+1$ from it. Therefore, $ID(u)$ cannot be smaller than $ID(v)$, since otherwise the greedy algorithm would reach $u$ before $v$ when iterating over the nodes and would inserted $u$ into $S$.

    To prove that $dist(v,w)\leq 2f+1$ if and only if $dist(u,w)\leq 2f+1$, we assume that $dist(v,w)\leq 2f+1$ and prove that $dist(w,u)\leq 2f+1$. The proof for the other direction is symmetric. Assume towards a contradiction that it is not the case, so $dist(u,w)\geq 2f+2$. Thus, using the triangle inequality we have
    \begin{align*}
        2f+2&\leq
        dist(u,w) 
        \leq dist(u,v)+dist(v,w) 
        \leq f+dist(v,w),
    \end{align*}
    where the last inequality follows from the definition of $u,v$ as alternative nodes. Thus, $dist(v,w)\geq f+2$. Assume towards a contradiction that there is a node $t$ on $P_{v,w}$ such that $dist(v,t)=dist(u,t)$. Since $t$ is on $P_{v,w}$ then $|P_{v,w}|=|P_{v,t}|+|P_{t,w}|$. Therefore, 
    \begin{align*}
        dist(u,w)&\leq |P_{u,t}|+|P_{t,w}|\\
        &=|P_{u,t}|+(|P_{v,w}|-|P_{v,t}|)\\
        &=dist(u,t)+(dist(v,w)-dist(v,t))\\
        &=dist(v,t)+(dist(v,w)-dist(v,t))\\
        &=dist(v,w)\\
        &\leq 2f+1.
    \end{align*}
    This contradicts the assumption that $dist(u,w)\geq 2f+2$. Thus, every node $t$ on $P_{v,w}$ satisfies $dist(u,t)\ne dist(v,t)$.

    Because $dist(v,w)\geq f+2$, there are at least $f+1$ nodes $q$ on $P_{v,w}$ such that $dist(q,u)\ne dist(q,v)$, and at least $f-1$ of them are at a distance of at most $f+1$ from $v$. Thus, $u$ is not an alternative node for $v$, contradicting the assumption. Therefore, $dist(u,w)\leq 2f+1$.
\end{proof}

\subsection{Proof of Theorem \ref{theorem:split to groups}}

\begin{proof}[Proof of Theorem \ref{theorem:split to groups}]
    Since $S$ is a $(2f+2,2f+1)$-ruling set, every node $u\in V$ has a node $v\in S$ within distance of at most $2f+1$ from it. Thus, $u$ belongs to exactly one tree, and in the BFS in Step \ref{partstep:bfs} in the algorithm, the node $u$ chooses one node to be its leader. Let $T_v$ be the tree rooted at $v$.

    ~\\
    \textbf{The BFS tree $T_v$ of every node $v\in S$ satisfies that $|T_v|\geq f+1$:} 
    For every node $v\in S$, there are at least $f+1$ nodes within distance $f$ from it (including $v$ itself), since the graph is connected. These nodes cannot be within distance $f$ from any other node $v'\in S$, and thus they join the BFS tree $T_v$. This proves that $|T_v|\geq f+1$ for all trees. 
    In particular, there exists a partition of the nodes in $T_v$ into groups of sizes at least $f+1$.
    
    ~\\
    \textbf{The algorithm obtains a partitioning of the nodes in every tree into groups with sizes ranging from $f+1$ to $3f+1$:} Fix $v\in S$ and consider only nodes in $T_v$.
    First, we prove that all groups created until the end of Step \ref{partstep:RelayGroups} are of size ranging from $f+1$ to $2f+1$.
    Let $u\in T_v$. In Step \ref{partstep:ComputeGroups}, the node $u$ runs the procedure $\ComputeGroups(u)$.
    During the procedure, when creating a group, the node $u$ starts with the first child in $order(u)$ and adds children according to $order(u)$ until the counter $count$ of their $remain$ values is at least $f+1$. For a group that starts with child $w_{a'}$ and ends with child $w_{b'}$, in Step \ref{partstep:ComputeGroups} the node $u$ sends this group's identifier to every child $w_i$ for which $a'\leq i\leq b'$. Then, each $w_i$ sends in $\RelayGroups(w_i)$ in Step \ref{partstep:RelayGroups} this group identifier to all its children for which it did not assign a group yet during the procedure $\ComputeGroups(w_i)$, and its children relay this identifier in the same manner. Thus, the value of $count$ is the size of the group, and so this size is at least $f+1$.

    Notice that if $count$ is at most $f$, when adding $remain(w')\leq f$ of some child $w'$, we get that the updated $count$ is at most $f+f=2f$ (any $remain$ value is always at most $f$, as otherwise $w'$ would form a group from these at least $f+1$ nodes). A special case is when a root of a subtree $u$ adds itself to a group it created, and then $count$ is at most $2f+1$. Therefore, since the size of a group is the value of $count$, then the size of a group ranges from $f+1$ to $2f+1$.

    This proves that all groups created until the end of Step \ref{partstep:RelayGroups} are of size ranging from $f+1$ to $2f+1$, but there still could be some values that $u$ needs to handle whose sum is $remain(u)$. For every node $u\notin S$, an ancestor of $u$ handles these values. But, the algorithm for the root $v\in S$ in this case differs, as it has no ancestors. Steps \ref{partstep:UpGroup}-\ref{partstep u down} provide the root with a group ID to which it adds its remainder, so the size of that group may become at most $2f+1+f=3f+1$. Thus, all the group sizes range from $f+1$ to $3f+1$.

    Notice that since every node belongs to exactly one group, then all the groups are disjoint.

    ~\\
    \textbf{For each edge $e\in E$, the number of subgraphs $G[Q_i]\cup H_i$ containing $e$ is at most $2$:}
    An edge $(u,w)$ is inserted into $H_{group}(u)$ if $u$ sends the group identifier $group$ to $w$ and this happens only if $w$ sends $remain(w)>0$ to $u=parent(w)$ in Step \ref{partstep:ComputeGroups}. There are two cases. The first is that $u$ assigns $w$ a group during the steps of creating groups. Thus, $u$ does not assign $w$ another group in this or later steps. Therefore, in this case, the edge $(u,w)$ is inserted into only one set $H_{group}(u)$ of $u$. The same argument applies if the edge is inserted into $H_{group}(u)$ during the steps of relaying group IDs.

    The second case is that $u$ receives $group$ from its parent during the steps of handling the root remainder. Then, the root $v$ satisfies that $remain(v)>0$. Thus, according to the algorithm, it assigns additional $remain(v)$ nodes to an existing group. Therefore, $(u,w)$ can be inserted into at most one additional group. Overall, there are at most two groups for which $u$ assigns $e=(u,w)$ to $H_{group}(u)$.

    ~\\
    \textbf{The diameter of the subgraph $G[Q_i]\cup H_i$ is at most $4f$}:
    Consider two cases, based on whether the group contains the root $v$ or not. 
    
    ~\\Case 1: Let $Q$ be a group which does not contain the root $v$, and let $u$ be the node that builds $Q$ during $\ComputeGroups(u)$. Since the group identifier is the ID of the first child in the group, there can be only a single $u$ that creates a group with this ID, and only nodes in the subtree of $u$ can belong to it.  
    Let $t\in Q$. In Step \ref{partstep:RelayGroups}, $parent(t)$ sends $ID(Q)$ to $t$ and inserts the edge $(parent(t),t)$ into $H_{ID(Q)}(parent(t))$. Additionally, if $u\ne parent(t)$, then $parent(t)$ also receives $ID(Q)$ from its parent and sets $Group(parent(t))=Q$ in Step \ref{partstep:RelayGroups} of procedure $\RelayGroups(parent(t))$. This process continues until we reach a node $w$ which is a child of $u$. Since $w$ also receives $ID(Q)$ from $u$ in Step \ref{partstep:ComputeGroups}, then $u$ inserts the edge $(u,w)$ into $H_{ID(Q)}(u)$, and $w$ sets $Group(w)=Q$ in Step \ref{partstep:RelayGroups} of procedure $\RelayGroups(w)$. Thus, on the path from $u$ to $t$, all the edges belong to $H_{ID(Q)}$, and all the nodes on the path from $w$ to $t$ belong to $Q$ including $w$ and $t$. Since this holds for every node $t\in Q$, the subgraph $G[Q]\cup H_{ID(Q)}$ is connected, and the distance between $u$ and $t$ in the tree is bounded by $f$, as otherwise, $w$ has enough nodes to create a group of size $f+1$ which includes $t$, contradicting the fact that $t$ is in the group $Q$ created by $u$. Thus, the diameter of every such group is bounded by $2f$.

    ~\\Case 2: Let $Q$ be the group which contains the root $v$.  
    There are two possibilities. If $Groups(v)\ne \emptyset$, then the remainder of $v$ is assigned by $v$ itself to one of the groups it creates, and the same argument as in the previous case applies.
    
    Otherwise, if $Groups(v)=\emptyset$, then in the steps handling the root remainder, the node $v$ assigns its remainder of nodes to a group that one of its descendants created. We prove that the distance of $v$ from the deepest node in $Q$ is bounded by $2f$. This implies that the diameter of $G[Q]\cup H_i$ is bounded by $4f$.

    Let $t$ be the node which created the group $Q$ in its $\ComputeGroups(t)$ procedure. If $t\in Q$ before handling the root remainder, then $Group(parent(t))=\emptyset$ at that point since otherwise, $v$ would receive $Group(parent(t))$ during Step \ref{partstep:UpGroup} instead of $Group(t)$ and thus $v$ would not be in $Q$ that $t$ created. Thus, $parent(t)$ sets $Group(parent(t))=ID(Q)$ and the same argument applies to every ancestor of $parent(t)$. This implies that the distance from $t$ to the root $v$ is at most $f$, since all of the nodes on that path join $Q$, and if the distance was larger then there would be enough nodes to create a group that is separate from $Q$.
    Moreover, the distance from $t$ to every node in its subtree that is in $Q$ is at most $f$, as otherwise the child of $t$ in that subtree would have $f+1$ to create that group. Together, this means that the distance from $v$ of any node in $Q$ is at most $f+f=2f$.

    If $t\notin Q$ before handling the root remainder, then a similar argument applies: any child $w$ of $t$ which is in $Q$ has distance at most $f-1$ to any other node in its subtree that is in $Q$. The node $t$ itself is again within distance at most $f$ to $v$. Thus, the distance from $v$ of any node in $Q$ is at most $(f-1)+1+f=2f$.
    
    ~\\The above proves that the set of subgraphs $H_1,\dots,H_K$ is a set of $(2,4f)$-shortcuts.

    ~\\\textbf{Round complexity:} Note that for every $v\in S$ and $u\in T_v\setminus \{v\}$, we have that $dist(u,v)\leq 2f+1$. Thus, downcasting and upcasting $f$ messages in $T_v$ takes at most $O(f)$ rounds by standard pipelining. Since we only downcast or upcast a constant number of times, we have that the algorithm completes in $O(f)$ rounds.
\end{proof}

\subsection{Proof of Claim \ref{claim: parallel passing info between groups}}
\begin{proof}[Proof of Claim \ref{claim: parallel passing info between groups}]
    Every node $u\in V$ holds a set $H_{group}(u)$ for every group $group$ that needs to communicate through $u$. This set contain edges to children of $u$. From Theorem \ref{theorem:split to groups}, for each edge $e\in E$, the number of subgraphs $G[Q_i]\cup H_i$ containing $e$ is at most $2$. Thus, every node $u$ can send to its children which edges belong to which group so that the group can communicate over them. Thus, all groups can communicate over their edges in parallel. Additionally, the size of every group $Q_i$ is $O(f)$. Therefore, on every edge, we need to pass $O(f)$ messages, where each message is of size $Y$. Moreover, according to Theorem \ref{theorem:split to groups}, for each group $i$, the diameter of the subgraph $G[Q_i]\cup H_i$ is at most $4f=O(f)$. Hence, by standard pipelining, for every node, it can collect $Y\cdot f$ bits of information from all nodes in its group, within $O(f+{(Y\cdot f)}/{\log n})$ rounds.
\end{proof}

\end{document}